\def\norm#1{\left\| #1 \right\|}
\newtheorem{definition}{Definition}[section]
\newtheorem{thm}{Theorem}[section]
\newtheorem{proposition}[thm]{Proposition}
\newtheorem{lemma}[thm]{Lemma}
\newtheorem{exam}{Example}[section]
\def\bea{\begin{IEEEeqnarray}{rCl}} 
\def\eea{\end{IEEEeqnarray}}
\def\bean{\begin{IEEEeqnarray*}{rCl}} 
\def\eean{\end{IEEEeqnarray*}}
\newtheorem{remark}{Remark}[section]
\DeclareMathOperator*{\diag}{diag}
\providecommand{\abs}[1]{\ensuremath{\left\lvert #1 \right\rvert}}
\providecommand{\norm}[1]{\ensuremath{\left\Vert #1 \right\Vert}}
\providecommand{\vv}[1]{\textquotedblleft #1\textquotedblright}
\newcommand{\Q}{\mathbb{Q}}
\newcommand{\Z}{\mathbb{Z}}
\newcommand{\C}{\mathbb{C}}
\newcommand{\R}{\mathbb{R}}
\DeclareMathOperator*{\Vol}{Vol}
\providecommand{\abs}[1]{\ensuremath{\left\lvert #1 \right\rvert}}
\providecommand{\norm}[1]{\ensuremath{\left\Vert #1 \right\Vert}}
\providecommand{\vv}[1]{\textquotedblleft #1\textquotedblright}
\newcommand{\D}{{\mathcal D}}
\newcommand{\OO}{{\mathcal O}}
\newcommand{\A}{{\mathcal A}}
\newcommand{\mindet}[1]{\hbox{\rm det}_{min}\left( #1\right)}
\providecommand{\vv}[1]{\textquotedblleft #1\textquotedblright}
\begin{document}

\title{Estimates for  the growth of inverse determinant sums of quasi-orthogonal and number field lattices }
\author{
\IEEEauthorblockN{Roope Vehkalahti}
\IEEEauthorblockA{Department of mathematics,
University of Turku\\
Finland\\
Email: roiive@utu.fi}
\and
\IEEEauthorblockN{Laura Luzzi}
\IEEEauthorblockA{Laboratoire ETIS,
 CNRS - ENSEA - UCP \\
Cergy-Pontoise, France \\
laura.luzzi@ensea.fr}

}

\maketitle

\begin{abstract}
Inverse determinant sums appear naturally as a tool for analyzing performance of space-time codes in Rayleigh fading channels.
This work will analyze the growth of inverse determinant sums of a family of quasi-orthogonal codes and will show that the growths are in logarithmic class.  This is  considerably lower than that of  comparable number field codes.  
\end{abstract}


\section{Introduction}

In \cite{VLL2013} inverse determinant sums were proposed as a tool to
analyze the performance of algebraic space-time codes for MIMO fading
channels. These sums can be seen as a generalization of the theta
series for the Gaussian channel. They arise naturally from the union bound on
the pairwise error probability for spherical constellations, but also in the analysis of fading wiretap channels \cite{BO}.

In  \cite{VLL2013} the authors analyzed the growth of the  inverse determinant sums of  diagonal number field  codes and  of most well known division algebra codes. In this work we are going to extend the analysis to  a large class of quasi-orthogonal codes. Our work will reveal that the growth of 
inverse determinant sums of the analyzed codes is considerably smaller than that of the corresponding diagonal number field codes. This difference suggest that asymptotically, with growing constellation, quasi-orthogonal codes are  considerably better than number field codes. This difference can not be captured in the framework of diversity-multiplexing gain tradeoff. 

For related work, we refer the reader to \cite{WZ} and \cite{JHL}.

\begin{remark}
This is a corrected and extended version of \cite{IZS}. The main correction is to Theorem \ref{Loxton} which we quoted carelessly in the previous version. This correction does affect all the formulas that are derived from it. However, it does not change any formulas related to complex constellations. In the case of totally real fields the change sometimes doubles  the number of units. The other correction is to the definition of the normalized inverse determinant sum \eqref{normalized}. This correction does not change any results, but is corrected for consistency.

The extensions in this paper are Propositions \ref{complexsingle} and \ref{realsingle}. These results follow from improving on the estimate for the truncated Dedekind zeta function done in \cite{VLL2013}. The  new estimate is Lemma \ref{tom}. The proof of this result is due to Tom Meurman and the authors are  grateful to him.
\end{remark}
\begin{remark}
We note that after the publication of \cite{IZS} the authors in  \cite{dave}  extended Proposition \ref{numberfieldbound} to ideals and  proved that our upper bound is tight in the case of principal ideal domains.
\end{remark}

\section{Inverse determinant sum}

We begin by providing basic definitions concerning matrix lattices and spherical constellations, that are needed in the sequel.  

\subsection{Matrix lattices and spherically shaped coding schemes}\label{latticesection}

\begin{definition}
A {\em space-time lattice code} $C \subseteq M_n(\C)$ has the form
$$
\Z B_1\oplus \Z B_2\oplus \cdots \oplus \Z B_k,
$$
where the matrices $B_1,\dots, B_k$ are linearly independent over $\R$, \emph{i.e.}, form a lattice basis, and $k$ is
called the \emph{rank}  or the \emph{dimension} of the lattice.
\end{definition}

\begin{definition}\label{def:NVD}
If the minimum determinant of the lattice $L \subseteq M_n(\C)$ is non-zero, i.e. it satisfies
\[
\inf_{{\bf 0} \neq X \in L} \abs{\det (X)} > 0, 
\]
we say that the code has a \emph{non-vanishing determinant} (NVD).
\end{definition}

We now consider a coding scheme based on a $k$-dimensional lattice $L$ inside $M_{n}(\C)$. For a given positive real number $M$ we define the finite code 
$$
L(M)=\{a  \,|\,a \in L, \norm{a}_F \leq M \},
$$
where $\norm{a}_F$ refers to the Frobenius norm.
In the following we will also use the notation
$$
B(M)=\{a  \,|\,a \in M_n(\C),\; \norm{a}_F \leq M \},
$$
for the sphere with radius $M$.

Let $L\subseteq M_n(\C)$ be a $k$-dimensional lattice. For any fixed $m\in \Z^+$ we define 
$$
S_L^m(M):=\sum_{X\in L(M) \setminus \{\mathbf{0}\}} \frac{1}{|\det(X)|^{m}}.
$$
Our main goal is to study the growth of this sum as $M$ increases. Note, however, that in order to have a fair comparison between two different space-time codes, these should be normalized to have the same average energy. Namely, the volume $\Vol(L)$ of the fundamental parallelotope 
$$\mathcal{P}(L)=\{\alpha_1 B_1 + \alpha_2 B_2 + \ldots + \alpha_k B_k \;|\; \alpha_i \in [0,1) \;\; \forall i\}$$
should be normalized to $1$. The normalized version of the inverse determinant sums problem is then to consider the growth of the sum $\tilde{S}_L^m(M)=S_{\tilde{L}}^m(M)$ over the lattice $\tilde{L}=\Vol(L)^{-1/k}L$. Since $\tilde{L}(M)=\Vol(L)^{-1/k}L(M \Vol(L)^{1/k})$, we have 
\begin{equation}
\tilde{S}_L^m(M)= \Vol(L)^{mn/k} S^m_L(M \Vol(L)^{1/k}).
\label{normalized}
\end{equation}

\begin{remark}
In \cite{IZS} we erroneously stated that 
$$
\tilde{S}_L^m(M) =\Vol(L)^{mn/k} S^m_L(M).
$$
The correct form  is  \eqref{normalized}, which was originally stated in \cite{LV}. The difference is that in the correct expression   the radius of the ball also increases. However, this difference does not affect the results in this paper. In every case in this paper the growth of the inverse determinant is logarithmic and therefore the constant term $\Vol(L)^{1/k}$ will have no effect on the dominating term of the sum.
\end{remark}

\subsection{Inverse determinant sums and error performance of space-time lattice codes}
Let us now consider the slow Rayleigh fading MIMO channel with $n$ transmit and $n_r$  receive antennas. The channel equation can then be written as
\[
Y \ = \ H X + N,
\]
where $H$ and $N$ are respectively the channel and noise matrices. We suppose that the transmitted codeword $X$ belongs to a finite code $L(M) \subset M_n(\C)$ carved from a $k$-dimensional NVD lattice $L$ as defined previously. In terms of pairwise error probability, we have for $X \neq X'$
\[
P(X \to X')\leq \frac{1}{| \det(X-X')|^{2n_r}}, 
\]
and the corresponding upper bound on overall error probability 
\[
P_e \ \leq \ \sum_{X\in L,\, 0 < \norm{X}_F\leq 2M}\frac{1}{ |\det(X)|^{2n_r}}.
\]

Our main goal is now to study the growth of the  sum $S_L^m(M)$  as $M$ increases. 
In particular, we want to find, if possible, a function $f(M)$ such that
$$
 S_L^m(M) \sim f(M).
$$

\section{Inverse determinant sums of algebraic number field codes} \label{sec:4}

In this section we will give and review some results concerning inverse determinant sums of diagonal number field codes.
These results will play an important role in our analysis of quasi-orthogonal codes. The proofs are  mostly analogous to those given in \cite{VLL2013} and we will skip them. However, the case where the receiver has a single antenna will improve on the original one given in \cite{VLL2013}.  Unlike in the rest of the paper we will  state the results in the normalized from  $\tilde{S}_L^m(M)$ following the general normalization given in   \cite{LV}.

We need  first some preliminary results. Let us suppose we have an algebraic number field $K$. We will denote with $\OO_K$ the ring of algebraic integers in $K$. If $B$ is an integral ideal of $K$ we will use  the notation $N(B)=[\OO_K:B]$.

Let $I_{\OO_K}$ be the set of nonzero ideals of the ring $\OO_K$. The \emph{Dedekind zeta function} of the number field $K$ is 
\begin{equation}\label{Dedekind_zeta_function}
\zeta_{K}(s)=\sum_{A\in I_{\OO_K}}\frac{1}{N(A)^s},
\end{equation}
where  $s$ is a complex number with  $\Re(s)>1$.
For a truncated version of this sum we will use notation 
$$
\zeta_K(s,M):=\sum_{A\in I_{\OO_K},  N(A)\leq M}\frac{1}{N(A)^s}.
$$
In the following  we will need an estimate for $\zeta_K(s,M)$. We will begin with a classical result. Let us suppose we have an algebraic number field
$K$ with \emph{signature} $(r_1, r_2)$.
\begin{thm}
The function $\zeta_K(s)$ converges for all $s$, $\Re(s)>1$ and has a simple pole at $1$ with residue
$$
\lim_{s\rightarrow 1} (s-1) \zeta_K(s)= \frac{2^{r_1}(2\pi)^{r_2}h_K R_K}{\omega\sqrt{|d(K)|}},
$$
where $R_K$ is the regulator of the number field $K$, $d(K)$ the discriminant and $\omega$ the number of roots of unity in $K$.
\end{thm}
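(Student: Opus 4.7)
The plan is to follow the classical proof of the analytic class number formula, partitioning $\zeta_K$ according to the ideal class group and counting ideals in each class via the geometry of numbers.

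First I would establish absolute convergence for $\Re(s)>1$ by using unique factorization of ideals together with the fact that each rational prime $p$ has at most $[K:\Q]$ prime divisors above it; this yields $a_K(n) := \#\{A \in I_{\OO_K}: N(A)=n\} \leq \tau(n)^{[K:\Q]}$, which is termwise dominated by a power of the Riemann zeta function on $\Re(s) > 1$. Next I would decompose
$$\zeta_K(s) = \sum_{\mathfrak{c} \in \mathrm{Cl}(K)} \zeta_K(s,\mathfrak{c}), \qquad \zeta_K(s,\mathfrak{c}) := \sum_{A \in I_{\OO_K},\, [A]=\mathfrak{c}} N(A)^{-s},$$
so the task reduces to showing each of the $h_K$ partial zetas has a simple pole at $s=1$ with residue $\tfrac{2^{r_1}(2\pi)^{r_2} R_K}{\omega\sqrt{|d(K)|}}$.

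Fix a class $\mathfrak{c}$ with integral representative $\mathfrak{b}$ of $\mathfrak{c}^{-1}$. The map $(\alpha)\mapsto (\alpha)\mathfrak{b}^{-1}$ is a bijection between nonzero principal ideals contained in $\mathfrak{b}$ (modulo $\OO_K^{\times}$-action) and integral ideals in $\mathfrak{c}$, satisfying $N((\alpha)\mathfrak{b}^{-1}) = |N_{K/\Q}(\alpha)|/N(\mathfrak{b})$. Under the canonical embedding $K\hookrightarrow \R^{r_1}\times \C^{r_2}$, the ideal $\mathfrak{b}$ becomes a full lattice of covolume $2^{-r_2}N(\mathfrak{b})\sqrt{|d(K)|}$, and by Dirichlet's unit theorem $\OO_K^{\times}/\text{torsion}$ is free of rank $r_1+r_2-1$, acting on the logarithmic image with a fundamental domain of $(r_1+r_2-1)$-volume $R_K$. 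Lifting this domain to a cone $C$ in $\R^{r_1}\times \C^{r_2}$ gives the region in which one must count lattice points from $\mathfrak{b}$ with $|N_{K/\Q}(\alpha)|\leq tN(\mathfrak{b})$.

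The key geometric computation yields $\vol(C\cap\{|N(x)|\leq T\}) = 2^{r_1}(2\pi)^{r_2}R_K\, T/\omega$, where $1/\omega$ accounts for torsion in $\OO_K^{\times}$; dividing by the covolume of $\mathfrak{b}$ and applying a Lipschitz-type lattice-point count gives
$$\#\{A \in \mathfrak{c}: N(A)\leq t\} = \frac{2^{r_1}(2\pi)^{r_2}R_K}{\omega\sqrt{|d(K)|}}\, t + O\!\left(t^{1-1/[K:\Q]}\right).$$
Summing over the $h_K$ classes and applying the Abelian identity $\zeta_K(s) = s\int_1^\infty \bigl(\sum_{N(A)\leq t}1\bigr)\,t^{-s-1}\,dt$ extracts the claimed residue. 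The main obstacle is the geometric step: one must pick a fundamental domain for the unit action with piecewise-smooth boundary so that Davenport's Lipschitz principle yields the stated error term, and then carefully bookkeep the factor $1/\omega$ coming from the torsion of $\OO_K^{\times}$ together with the $2^{r_1}(2\pi)^{r_2}$ factor from the real/complex normalization of the Minkowski embedding.
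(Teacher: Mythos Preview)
Your outline is the standard proof of the analytic class number formula and is correct as a sketch; the main steps (convergence via the Euler product bound, decomposition into partial zeta functions over ideal classes, reduction to a lattice-point count in a cone fundamental domain for the unit action, and Abel summation to extract the residue) are all in order, and you have correctly flagged the delicate points (Lipschitz boundary of the fundamental domain, bookkeeping of $\omega$ and the $2^{r_1}(2\pi)^{r_2}$ normalization).

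However, the paper does not actually prove this theorem. It is introduced with the phrase ``We will begin with a classical result'' and is simply quoted as background; no argument is given. The paper only uses the residue formula to define the constant $\alpha_K$ and then moves on to the estimates it actually needs (Proposition~\ref{vano} and Lemma~\ref{tom}). So there is nothing to compare: your proposal supplies a proof where the paper supplies none.
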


In the following we will use the shorthand
$$
\alpha_K=\frac{2^{r_1}(2\pi)^{r_2} R_K}{\omega\sqrt{|d(K)|}}.
$$

In order to estimate the truncated Dedekind zeta function at the point 1 we need  an estimate for the number of ideals with bounded norm. Let us denote with  $N(K, M)$ the number of integral  ideals  of $K$ with norm less than or equal to $M$. We then have the following.

\begin{proposition}\cite[Theorem 5]{MO}\label{vano}
Given a degree $n$ number field $K$ there exist  positive constants $c$ and  $a$, independent of $M$, such that
$$
|N(K,M)-\alpha_K h_K M| \leq c M^{1-a}.
$$
\end{proposition}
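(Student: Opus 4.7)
The plan is to prove this via Landau's classical lattice-point counting method, reducing the ideal count to a geometry-of-numbers problem one ideal class at a time. First I would decompose
$$
N(K,M) = \sum_{C} N_C(K,M),
$$
where $C$ ranges over the $h_K$ classes of the ideal class group and $N_C(K,M)$ counts integral ideals in $C$ of norm at most $M$. For a fixed $C$, pick an integral representative $\mathfrak{a}$ of the inverse class $C^{-1}$; then integral ideals in $C$ of norm at most $M$ correspond bijectively to associate classes of nonzero $\alpha\in\mathfrak{a}$ with $|N(\alpha)|\leq M\cdot N(\mathfrak{a})$, via $B \mapsto B\mathfrak{a}=(\alpha)$.

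Second, I would pass to the Minkowski embedding $\sigma\colon K \hookrightarrow K\otimes_{\Q}\R\cong \R^{r_1}\times \C^{r_2}$, under which $\mathfrak{a}$ becomes a full-rank lattice $\Lambda_C$ of covolume $\sqrt{|d(K)|}\,N(\mathfrak{a})/2^{r_2}$. Dirichlet's unit theorem gives a full-rank lattice in the trace-zero hyperplane of $\R^{r_1+r_2}$ coming from $\OO_K^\times$ modulo torsion, and choosing a fundamental parallelotope for this logarithmic action together with the norm cutoff $|N(\alpha)|\leq M\cdot N(\mathfrak{a})$ produces a bounded region $D(M) \subset K\otimes_{\Q}\R$ which is an $M^{1/n}$-dilate of a fixed region $D_0$. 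The count then becomes
$$
N_C(K,M) \;=\; \frac{1}{\omega}\,\#\bigl(\Lambda_C \cap D(M)\bigr).
$$
By standard Lipschitz-boundary lattice-point estimates, this equals $\Vol(D_0)\Vol(\Lambda_C)^{-1}M + O(M^{1-1/n})$, and a direct computation identifies the main-term constant summed over classes with $\alpha_K h_K$, giving the desired estimate with $a=1/n$.

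The main obstacle will be establishing that $D_0$ has a boundary admitting a Lipschitz parametrization by a finite union of cubes, uniformly across classes: the fundamental domain in the log-embedding is defined by linear inequalities, but once intersected with the norm constraint and pulled back to $\R^{r_1}\times \C^{r_2}$ it is a semi-algebraic region whose regularity needs to be checked carefully. An alternative analytic route is available via Perron's formula applied to $\zeta_K(s)$ combined with a contour shift past the simple pole at $s=1$; there the main term $\alpha_K h_K M$ arises directly as the residue, and the error term comes from convexity (Phragmén--Lindelöf) estimates for $\zeta_K(\sigma+it)$ in the critical strip, yielding an exponent of the form $a=2/(n+1)$. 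Since the proposition only requires the existence of some $a>0$, either approach suffices.
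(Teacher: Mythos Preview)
The paper does not prove this proposition at all: it is quoted verbatim as \cite[Theorem 5]{MO} and used as a black box in the proof of Lemma~\ref{tom}. So there is no in-paper argument to compare against.

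Your sketch is a sound outline of the classical Landau--Dedekind--Weber argument. The class-by-class reduction, the bijection $B\leftrightarrow (\alpha)=B\mathfrak{a}$, the Minkowski embedding with the stated covolume, and the recognition that the relevant region is an $M^{1/n}$-dilate of a fixed fundamental domain for the $\OO_K^\times$-action are all correct, and the obstacle you flag (Lipschitz parametrizability of $\partial D_0$) is exactly the technical point that requires work. Your alternative via Perron's formula and convexity bounds for $\zeta_K$ is also a legitimate route and is closer in spirit to what Murty--Van~Order actually do; either way one obtains some $a>0$, which is all the paper needs downstream.
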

We are now ready to estimate the truncated Dedekind zeta function.

\begin{lemma}\label{tom}
We have the following
$$
\sum_{A\in I_{\OO_K},  N(A)\leq M}\frac{1}{N(A)}= h_K\alpha_K \log{M}+O(1).
$$
\end{lemma}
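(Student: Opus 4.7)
The plan is to derive this by a standard Abel (partial) summation from Proposition \ref{vano}. Write $a_n$ for the number of integral ideals $A \subset \OO_K$ with $N(A)=n$, and set $A(t)=\sum_{n\leq t}a_n = N(K,t)$. Proposition \ref{vano} gives
$$A(t)=\alpha_K h_K\, t + E(t), \qquad |E(t)|\leq c\, t^{1-a},$$
for some constants $c>0$ and $a>0$ independent of $t$. The truncated zeta sum can then be rewritten as
$$\sum_{A\in I_{\OO_K},\, N(A)\leq M}\frac{1}{N(A)} = \sum_{n\leq M}\frac{a_n}{n}.$$

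First I would apply Abel summation to the right-hand side, which gives
$$\sum_{n\leq M}\frac{a_n}{n} = \frac{A(M)}{M} + \int_{1}^{M}\frac{A(t)}{t^{2}}\,dt.$$
Substituting the estimate for $A(t)$, the boundary term $A(M)/M = \alpha_K h_K + O(M^{-a})$ is $O(1)$, and the integral splits as
$$\int_{1}^{M}\frac{\alpha_K h_K t}{t^{2}}\,dt + \int_{1}^{M}\frac{E(t)}{t^{2}}\,dt = \alpha_K h_K \log M + \int_{1}^{M}\frac{E(t)}{t^{2}}\,dt.$$

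The key observation is that, because $a>0$, the remainder integral is absolutely bounded: $\bigl|E(t)/t^{2}\bigr|\leq c\, t^{-1-a}$, so
$$\left|\int_{1}^{M}\frac{E(t)}{t^{2}}\,dt\right| \leq c \int_{1}^{\infty} t^{-1-a}\,dt = \frac{c}{a} < \infty,$$
uniformly in $M$. Collecting terms yields $h_K \alpha_K \log M + O(1)$, as claimed.

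There is no real obstacle beyond this bookkeeping; the only delicate point is that the power-saving exponent $a>0$ in Proposition \ref{vano} is what makes the error integral convergent, so a merely $o(M)$ bound on $N(K,M)-\alpha_K h_K M$ would \emph{not} suffice — one really needs the sharper estimate from \cite{MO}. Everything else is a direct computation and no additional hypothesis is required.
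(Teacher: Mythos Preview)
Your proof is correct and follows essentially the same route as the paper: both arguments rewrite the sum as $\sum_{n\leq M} a_n/n$, apply Abel/partial summation, and use the power-saving error term from Proposition~\ref{vano} to show the remainder integral is $O(1)$. The only cosmetic difference is that the paper subtracts the main term $h_K\alpha_K$ from $a_n$ \emph{before} summing by parts, whereas you sum by parts first and split afterward; the substance is identical.
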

\begin{proof}
Let us first write
$$
\sum_{  N(A)\leq M}\frac{1}{N(A)}=\sum_{n\leq M}\frac{1}{n}\sum_{N(A)=n} 1=\sum_{n\leq M} \frac{z(n)}{n}.
$$
Manipulating  the previous expression, we can write it  in  the form
\begin{equation}\label{lavennus}
S+T=\sum_{n\leq M} \frac{z(n)-h_K\alpha_K }{n}+\sum_{n\leq M} \frac{h_K\alpha_K}{n}.
\end{equation}
Let us now analyze the first term  $S$ by using partial summation  \cite[A. 21]{ivic}
$$
S=\frac{1}{M}\sum_{n\leq M} (z(n)-h_K\alpha_K)+\int_1^M \frac{1}{t^2}\sum_{n\leq t} (z(n)-h_K\alpha_K)dt.
$$
According to Proposition \ref{vano} we have that
$$
\sum_{n\leq t} (z(n)-h_K\alpha_K)=\sum_{n\leq t} z(n) - h_K\alpha_K t+ O(1)=O(t^{1-a}).
$$
It follows that
 $$
\int_1^M \frac{1}{t^2}\sum_{n\leq t}(z(n)-h_K\alpha_K)dt=\int_1^M  O(t^{-1-a})dt.
$$

Putting now all together we have that
$$
S=\frac{1}{M}O(M^{1-a})+\int_1^M O(t^{-1-a})dt=O(M^{-a})+O(1).
$$
Estimating  $\sum_{n\leq M} \frac{h_K\alpha_K}{n}=h_K\alpha_K \log{M}+O(1)$,  gives us the final result.
\end{proof}

\begin{remark}
This results improves on the very crude estimate  \cite{VLL2013}
$$
\sum_{A\in I_{\OO_K},  N(A)\leq M}\frac{1}{N(A)}\leq d(\log{M})^n,
$$
where $d$ is some constant.
\end{remark}

\subsection{Inverse determinant sums of real  diagonal number field codes}
Let $K$ be a totally real number field of degree $n$ and let $\{ \sigma_1, \cdots, \sigma_n \}$ be the $\Q$-embeddings   from $K$ to $\R$. 
We then have the canonical embedding $\psi:K\mapsto M_n(\R)$ defined by
 $$
\psi(x)=\diag(\sigma_1(x),\dots, \sigma_n(x)).
$$
It is a well known result that $\psi(\OO_K)$ is an $n$-dimensional NVD lattice in $M_n(\R)$. Let us now consider  the corresponding inverse determinant sum.  The main role in the analysis is played by the following unit group density result.

\begin{thm}[\cite{EL}]\label{Loxton}
Let us suppose that $[K:\Q]=n$, we then have that
$$
|\psi(\OO_K^*)\cap B(M)|= N_K(\log{M})^{n-1} + O((\log{M})^{n-2}),
$$
where $N_K = \frac{\omega n^{n-1}}{R (n-1)!}$.
\end{thm}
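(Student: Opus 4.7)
The plan is to reduce the count to a lattice-point counting problem in the log-embedding hyperplane via Dirichlet's unit theorem.

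First, I would invoke Dirichlet's unit theorem to write $\OO_K^* \cong \mu_K \times U$, with $|\mu_K|=\omega$ and $U$ free abelian of rank $n-1$. Consider the logarithmic map $\lambda:\OO_K^* \to \R^n$, $\lambda(u)=(\log|\sigma_1(u)|,\ldots,\log|\sigma_n(u)|)$. Since $|N_{K/\Q}(u)|=1$ for every unit, the image lies in the hyperplane $H=\{x\in\R^n:\sum_i x_i=0\}$, and $\Lambda:=\lambda(U)$ is a full-rank lattice in $H$ whose covolume, with respect to the Euclidean measure induced from $\R^n$, equals $R_K\sqrt{n}$ by the very definition of the regulator.

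Second, I would translate the Frobenius condition into one on $x=\lambda(u)$. Since $\psi(u)$ is diagonal, $\|\psi(u)\|_F^2 = \sum_i \sigma_i(u)^2 = \sum_i e^{2x_i}$. Roots of unity act trivially on $\lambda$ but produce $\omega$ distinct images under $\psi$, hence
$$|\psi(\OO_K^*)\cap B(M)| = \omega\cdot |\Lambda \cap R_M|, \quad R_M := \Bigl\{ x\in H : \sum_{i=1}^n e^{2x_i} \le M^2 \Bigr\}.$$

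Third, I would compute $\mathrm{vol}_H(R_M)$ by rescaling $x=(\log M)z$: the condition becomes $\sum_i e^{2(\log M)(z_i-1)}\le 1$, which converges, exponentially fast in $\log M$ away from the boundary, to the $(n-1)$-simplex $\Sigma:=\{z\in H: z_i\le 1 \text{ for all } i\}$. Substituting $w_i=1-z_i$ for $i<n$ identifies $\Sigma$ with the standard simplex $\{w\in\R^{n-1}_{\ge 0}:\sum_i w_i\le n\}$ of $(n-1)$-volume $n^{n-1}/(n-1)!$; the induced metric on $H$ contributes a further factor $\sqrt{n}$. Thus
$$\mathrm{vol}_H(R_M) = \sqrt{n}\,\frac{n^{n-1}}{(n-1)!}(\log M)^{n-1} + O((\log M)^{n-2}).$$

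Finally, a standard lattice-point estimate for bounded regions with piecewise-smooth boundary yields $|\Lambda\cap R_M| = \mathrm{vol}_H(R_M)/(R_K\sqrt{n}) + O((\log M)^{n-2})$, and multiplying by $\omega$ produces the constant $N_K = \omega n^{n-1}/(R_K(n-1)!)$. The main obstacle is really the error term: one must verify both that the symmetric difference between $R_M$ and the scaled limit $(\log M)\Sigma$ has $(n-1)$-volume $O((\log M)^{n-2})$—which follows from the exponential decay of the defining inequality away from the facets of $\Sigma$—and that the classical lattice-point discrepancy for convex polytopes in $\R^{n-1}$ is of the corresponding surface-area order. This is essentially the entire content of the cited unit-density theorem.
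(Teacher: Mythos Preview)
The paper does not prove this theorem at all; it is simply quoted from Everest--Loxton \cite{EL}, so there is no ``paper's own proof'' to compare against. Your sketch is the standard argument underlying that cited result: Dirichlet's unit theorem, the logarithmic embedding into the trace-zero hyperplane $H$, identification of the Frobenius ball with a region that dilates like $(\log M)\Sigma$ for an explicit simplex $\Sigma$, and a Gauss-type lattice-point count against the covolume $R_K\sqrt{n}$. The constants you compute are correct, and the two potential subtleties you flag---that the symmetric difference between $R_M$ and $(\log M)\Sigma$ has volume $O((\log M)^{n-2})$, and that the lattice discrepancy for a convex body in $H$ is of surface-area order---are exactly the points that require work in the Everest--Loxton paper. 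One small remark: your argument, as written, uses that the $\sigma_i$ are real embeddings (so that $\sigma_i(u)^2=e^{2x_i}$); this is fine here since the paper states the theorem in the totally real subsection, but you should say so explicitly.
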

Here $R$ is the regulator of the number field $K$, $\omega$ the number of roots of unity in $K$.

We first have a real analogue to the results in \cite{VLL2013}.

\begin{proposition}\label{numberfieldbound}
Let us suppose that $K$ is a totally real number field with $[K:\Q]=n$ and that $m>1$. Then 
$$
\tilde{S}_{\psi(\OO_K)}^{m}(M)\leq  \tilde{N_K} \zeta_K(m) \left(\log{M}\right)^{n-1} + O((\log{M})^{n-2}) 
$$
and
$$
\tilde{N}_K\left(\log{M}\right)^{n-1} + O((\log{M})^{n-2})\leq \tilde{S}_{\psi(\OO_K)}^{m}(M),
$$
where $\tilde{N}_K = \frac{\omega(n)^{n-1}}{R_K (n-1)!}(\sqrt{|d(K)|})^{m}$. 
\end{proposition}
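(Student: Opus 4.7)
The plan is to group nonzero elements of $\OO_K$ into orbits under multiplication by units. Each orbit is of the form $\OO_K^* \cdot \alpha$, corresponding to a nonzero principal ideal $\mathfrak{a}=(\alpha)$; within one orbit every element $u\alpha$ has $|\det\psi(u\alpha)|=|N_{K/\Q}(u\alpha)|=N(\mathfrak{a})$ because units have norm $\pm 1$. Picking one generator $\alpha_\mathfrak{a}$ per principal ideal, I can write
\begin{equation*}
S^m_{\psi(\OO_K)}(M)=\sum_{\mathfrak{a}\text{ principal}}\frac{N'_\mathfrak{a}(M)}{N(\mathfrak{a})^m},\qquad N'_\mathfrak{a}(M):=\bigl|\{u\in\OO_K^*:\|\psi(u\alpha_\mathfrak{a})\|_F\leq M\}\bigr|.
\end{equation*}
AM--GM on $\|\psi(u\alpha_\mathfrak{a})\|_F^2=\sum_i\sigma_i(u)^2\sigma_i(\alpha_\mathfrak{a})^2$ forces $N(\mathfrak{a})\leq M^n/n^{n/2}$ for the term to be nonzero, so the outer sum is finite.

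For the lower bound it suffices to retain only $\mathfrak{a}=(1)$, whose contribution is exactly $|\psi(\OO_K^*)\cap B(M)|$. Theorem~\ref{Loxton} then gives $S^m_{\psi(\OO_K)}(M)\geq N_K(\log M)^{n-1}+O((\log M)^{n-2})$.

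For the upper bound, the crux is to control $N'_\mathfrak{a}(M)$ uniformly in $\mathfrak{a}$. Here I will use the standard geometry-of-numbers fact that the log-unit lattice $\{(\log|\sigma_i(u)|)_i:u\in\OO_K^*\}$ is cocompact in the trace-zero hyperplane of $\R^n$. Consequently, every principal ideal admits a \emph{balanced} generator $\alpha_\mathfrak{a}$ with $|\sigma_i(\alpha_\mathfrak{a})|\asymp N(\mathfrak{a})^{1/n}$ uniformly, i.e.\ $\min_i|\sigma_i(\alpha_\mathfrak{a})|\geq c_K N(\mathfrak{a})^{1/n}$ for a constant $c_K>0$ depending only on $K$. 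With such a choice, $\|\psi(u\alpha_\mathfrak{a})\|_F\leq M$ implies $|\sigma_i(u)|\leq M/|\sigma_i(\alpha_\mathfrak{a})|\leq c_K^{-1}M N(\mathfrak{a})^{-1/n}$, so Theorem~\ref{Loxton} applied to a ball of radius $\sqrt{n}\,c_K^{-1}M N(\mathfrak{a})^{-1/n}$ yields
\begin{equation*}
N'_\mathfrak{a}(M)\leq N_K\Bigl(\log M-\tfrac{1}{n}\log N(\mathfrak{a})+O(1)\Bigr)^{n-1}+O((\log M)^{n-2})\leq N_K(\log M)^{n-1}+O((\log M)^{n-2}),
\end{equation*}
uniformly in $\mathfrak{a}$ since $N(\mathfrak{a})\geq 1$. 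Dividing by $N(\mathfrak{a})^m$ and summing over all principal ideals (bounded above by the sum over all ideals, which is $\zeta_K(m)<\infty$ for $m>1$) gives the unnormalised upper bound $N_K\zeta_K(m)(\log M)^{n-1}+O((\log M)^{n-2})$.

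The final step is to apply the normalisation \eqref{normalized} with $k=n$ and $\Vol(\psi(\OO_K))=\sqrt{|d(K)|}$: the factor $\Vol(L)^{mn/k}=|d(K)|^{m/2}$ absorbs $N_K$ into $\tilde{N}_K=(\sqrt{|d(K)|})^m N_K$, and the additive $O(1)$ shift in $\log(M|d(K)|^{1/(2n)})$ only perturbs the leading term by $O((\log M)^{n-2})$. The main obstacle I anticipate is the uniform-in-$\mathfrak{a}$ control of $N'_\mathfrak{a}(M)$: one must (i) verify that balanced generators exist with a $K$-dependent but $\mathfrak{a}$-independent constant, and (ii) check that the error term $O((\log M)^{n-2})$ in Theorem~\ref{Loxton} remains uniform when the ball's radius depends on $\mathfrak{a}$. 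Both reduce to the cocompactness of the log-unit lattice and the standard convex-body lattice-counting estimate underlying Theorem~\ref{Loxton}.
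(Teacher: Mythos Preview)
Your proposal is correct and follows essentially the same route the paper has in mind: the paper does not prove this proposition here but refers to \cite{VLL2013}, and from the surrounding material (the orbit decomposition \eqref{basicsum}, Lemma~\ref{basic}, and the proof of Proposition~\ref{uniformmain}) one sees exactly your strategy---split $\OO_K\setminus\{0\}$ into unit orbits indexed by principal ideals, bound each orbit count uniformly by $|\psi(\OO_K^*)\cap B(cM)|$, invoke Theorem~\ref{Loxton}, and sum against $\zeta_K(m)$. Your ``balanced generator'' argument is precisely a proof of Lemma~\ref{basic} in the diagonal case, and the normalisation step is handled correctly.

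One small remark on the obstacle you flag: you do not actually need the error term in Theorem~\ref{Loxton} to be uniform in a varying radius. Since $N(\mathfrak{a})\geq 1$, your bound $|\sigma_i(u)|\leq c_K^{-1}M N(\mathfrak{a})^{-1/n}$ already gives $\|\psi(u)\|_F\leq \sqrt{n}\,c_K^{-1}M$ \emph{independently of} $\mathfrak{a}$, so a single application of Theorem~\ref{Loxton} at radius $\sqrt{n}\,c_K^{-1}M$ suffices and the uniformity issue evaporates. This is exactly how Lemma~\ref{basic} is used in the paper.
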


In the case of single receive antenna we  now have the following.
\begin{proposition}\label{realsingle}
Let us suppose that $K$ is a totally real number field with $[K:\Q]=n$ and that $m=1$. Then 
$$
\tilde{S}_{\psi(\OO_K)}^{m}(M)\leq  c_K\left(\log{M}\right)^{n} + O((\log{M})^{n-1}),
$$
where $c_K=\frac{h_K2^n n^{n}}{(n-1)!}$.
\end{proposition}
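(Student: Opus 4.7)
The plan is to follow the strategy of Proposition \ref{numberfieldbound}, partitioning elements of $\OO_K \setminus \{0\}$ by the principal ideals they generate, but replacing the crude polylogarithmic bound on $\zeta_K(1,M)$ used in \cite{VLL2013} with the sharp asymptotic of Lemma \ref{tom}. Every nonzero $x \in \OO_K$ factors uniquely as $x = u\, x_0^{(A)}$, where $A = (x)$, $x_0^{(A)}$ is a fixed generator of $A$, and $u \in \OO_K^*$. Since $|\det\psi(x)| = \prod_i |\sigma_i(x)| = N(A)$, this gives
$$
S^1_{\psi(\OO_K)}(M) = \sum_{A \text{ principal}} \frac{|\{u \in \OO_K^* : \norm{\psi(u\, x_0^{(A)})}_F \leq M\}|}{N(A)}.
$$

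For each $A$, I would take $x_0^{(A)}$ to be \emph{balanced}, meaning that its logarithmic embedding $(\log|\sigma_1(x_0^{(A)})|, \ldots, \log|\sigma_n(x_0^{(A)})|)$ lies within bounded (depending only on $K$) distance from $\frac{\log N(A)}{n}(1, \ldots, 1)$. Such a choice exists because, by Dirichlet's unit theorem, the log-embeddings of $\OO_K^*$ form a full lattice in the trace-zero hyperplane of $\R^n$ and thus admit a compact fundamental domain; any generator of $A$ can be translated into it by multiplying with a suitable unit. This forces $|\sigma_i(x_0^{(A)})| \geq c_0\, N(A)^{1/n}$ for all $i$ and some $c_0 = c_0(K) > 0$, so
$$
\norm{\psi(u\, x_0^{(A)})}_F^2 = \sum_i \sigma_i(u)^2 \sigma_i(x_0^{(A)})^2 \geq c_0^2\, N(A)^{2/n}\, \norm{\psi(u)}_F^2,
$$
and the constraint $\norm{\psi(u\, x_0^{(A)})}_F \leq M$ forces $\norm{\psi(u)}_F \leq M/(c_0 N(A)^{1/n})$. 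The corresponding term is zero unless $N(A) \leq (M/(c_0\sqrt{n}))^n$, and Theorem \ref{Loxton} bounds the numerator uniformly by $N_K (\log M)^{n-1} + O((\log M)^{n-2})$.

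Inserting these bounds into the sum and extending the index set from principal to all integral ideals (which only enlarges it), Lemma \ref{tom} applied with truncation $cM^n$ for a suitable constant $c$ gives
$$
\sum_{A,\, N(A) \leq cM^n} \frac{1}{N(A)} = h_K \alpha_K n \log M + O(1),
$$
and therefore $S^1_{\psi(\OO_K)}(M) \leq N_K\, n\, h_K\, \alpha_K\, (\log M)^n + O((\log M)^{n-1})$. Substituting the totally real formulae $\alpha_K = 2^n R_K/(\omega\sqrt{|d(K)|})$ and $N_K = \omega n^{n-1}/(R_K(n-1)!)$, the leading coefficient simplifies to $\frac{h_K\, 2^n n^n}{(n-1)!\sqrt{|d(K)|}}$. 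Finally, applying the normalization \eqref{normalized} with $\Vol(\psi(\OO_K)) = \sqrt{|d(K)|}$, and noting that $\log(M\Vol(L)^{1/n}) = \log M + O(1)$, the $\sqrt{|d(K)|}$ factor cancels and one obtains the claimed bound $\tilde{S}^1_{\psi(\OO_K)}(M) \leq c_K (\log M)^n + O((\log M)^{n-1})$ with $c_K = \frac{h_K\, 2^n n^n}{(n-1)!}$.

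The only non-routine ingredient is the existence of a balanced generator for each principal ideal, which reduces to the compactness of a fundamental domain for the unit log-lattice in the trace-zero hyperplane; the rest of the argument is careful bookkeeping, and the key improvement over the previous analysis in \cite{VLL2013} is exactly the extra $\log M$ factor supplied by Lemma \ref{tom}.
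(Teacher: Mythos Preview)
Your proposal is correct and follows essentially the same route as the paper: decompose $S^1_{\psi(\OO_K)}(M)$ by principal ideals, bound the unit count uniformly via Theorem \ref{Loxton}, and replace the crude polylogarithmic estimate for the truncated Dedekind zeta function by the sharp asymptotic of Lemma \ref{tom}. The paper simply quotes the product inequality $\tilde{S}^1_{\psi(\OO_K)}(M) \leq \zeta_K(1, M^n/n^{n/2})\bigl(\tilde{N}_K(\log M)^{n-1}+O((\log M)^{n-2})\bigr)$ from \cite{VLL2013} and plugs in Lemma \ref{tom}, whereas you re-derive that inequality by hand (your balanced-generator argument and the normalization bookkeeping are exactly the details hidden behind that citation); the leading constants match after the same cancellation of $R_K$, $\omega$, and $\sqrt{|d(K)|}$.
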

\begin{proof}
Following the proof of Proposition 4.4 and Section 4.C in \cite{VLL2013} we have an estimate
$$
\tilde{S}_{\psi(\OO_K)}^{m}(M)\leq \zeta_K(1,\frac{M^n}{n^{n/2}})(\tilde{N}_K(\log{M})^{n-1} + O((\log{M})^{n-2})),
$$
and the end result now follows from plugging  the result of Lemma \ref{tom} and Theorem \ref{Loxton} to this  estimate.
\end{proof}

For principal ideal domains the upper bound is likely tight.

\subsection{Inverse determinant sums of complex diagonal number field codes}
 
Let $K/\Q$ be  a totally complex extension of degree $2n$ and $\{\sigma_1,\dots,\sigma_n\}$ be a set  of  $\Q$-embeddings, such that we have chosen one from each complex conjugate pair. Then we can define a
\emph{relative canonical embedding} of $K$ into $M_n(\C)$ by
$$
\psi(x)=diag(\sigma_1(x),\dots, \sigma_n(x)).
$$
The ring of algebraic integers $\OO_K$ has a  $\Z$-basis $W=\{w_1,\dots ,w_{2n}\}$ and $\psi(W)$ is a $\Z$-basis for the full  lattice $\psi(\OO_K)$ in $M_n(\C)$.

\begin{proposition}\cite[Section 4.C]{VLL2013}\label{complexbound}
Let $K$ be a totally complex algebraic number field of degree $2n$. If $n_r>1$,  we  have that
$$
\tilde{S}_{\psi(\OO_K)}^{2n_r}(M)\leq  \tilde{N}_K \zeta_K(n_r) \left(\log{M}\right)^{n-1} + O((\log{M})^{n-2}) 
$$
and
$$
\tilde{N}_K\left(\log{M}\right)^{n-1} + O((\log{M})^{n-2})\leq \tilde{S}_{\psi(\OO_K)}^{2n_r}(M),
$$
where $\tilde{N}_K = \frac{ \omega (n)^{n-1}}{R (n-1)!}(2^{-n}\sqrt{|d(K)|})^{n_r}$. 
\end{proposition}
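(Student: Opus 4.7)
The plan is to parallel the proof of Proposition \ref{numberfieldbound}, replacing the canonical real embedding with the relative canonical embedding into $M_n(\C)$ and using a totally complex analogue of Theorem \ref{Loxton}. The starting identity is
\begin{equation*}
|\det \psi(x)|^2 = \prod_{i=1}^{n}|\sigma_i(x)|^2 = |N_{K/\Q}(x)|,
\end{equation*}
which holds because selecting one embedding from each complex conjugate pair and squaring the modulus recovers the full field norm. Grouping $x \in \OO_K \setminus \{0\}$ by the principal ideal $(x)$ it generates, and parameterizing each orbit of generators as $\OO_K^* \cdot x_I$ for a fixed generator $x_I$, I would rewrite
\begin{equation*}
S^{2n_r}_{\psi(\OO_K)}(M) = \sum_{I \,\mathrm{principal}} \frac{U_I(M)}{N(I)^{n_r}}, \quad U_I(M) := |\{u \in \OO_K^* : \norm{\psi(u x_I)}_F \leq M\}|.
\end{equation*}

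Next, AM--GM gives $\norm{\psi(y)}_F^2 \geq n\,|N_{K/\Q}(y)|^{1/n}$, so only principal ideals $I$ with $N(I) \leq (M^2/n)^n$ contribute. For each such $I$ I would use Dirichlet's unit theorem in the logarithmic embedding to choose $x_I$ so that the moduli $|\sigma_i(x_I)|$ all lie within a bounded multiplicative factor of $N(I)^{1/(2n)}$; this is possible because $\OO_K^*$ surjects onto a lattice of rank $n-1$ in the trace-zero hyperplane, and its fundamental domain has finite diameter. After this reduction, the constraint $\norm{\psi(u x_I)}_F \leq M$ translates into $\norm{\psi(u)}_F \leq C\, M\, N(I)^{-1/(2n)}$ for a constant $C>0$ depending only on $K$, and applying the complex analogue of Theorem \ref{Loxton} — valid because the unit rank is still $n-1$ for a totally complex field of degree $2n$ — yields
\begin{equation*}
U_I(M) \leq \frac{\omega\, n^{n-1}}{R(n-1)!}(\log M)^{n-1} + O((\log M)^{n-2}),
\end{equation*}
uniformly in $I$, since the $-\tfrac{1}{2n}\log N(I)$ correction to $\log M$ gets absorbed into lower-order terms.

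Summing over $I$, the series $\sum_{I \,\mathrm{principal}} N(I)^{-n_r}$ is dominated by $\zeta_K(n_r)$, which converges thanks to the hypothesis $n_r>1$; this produces the claimed upper bound. The matching lower bound is obtained by discarding all terms except $I = \OO_K$ (where $x_I = 1$): this contribution is exactly $|\psi(\OO_K^*) \cap B(M)|$, which by the complex Loxton theorem is at least $\frac{\omega\, n^{n-1}}{R(n-1)!}(\log M)^{n-1} + O((\log M)^{n-2})$. Finally I would apply the normalization \eqref{normalized} with $\Vol(\psi(\OO_K)) = 2^{-n}\sqrt{|d(K)|}$ and lattice rank $k=2n$: the prefactor $\Vol(L)^{n_r} = (2^{-n}\sqrt{|d(K)|})^{n_r}$ supplies the missing factor in $\tilde{N}_K$, while the rescaling $M \mapsto M\,\Vol(L)^{1/(2n)}$ inside the sum shifts $\log M$ by only $O(1)$ and so does not alter the leading $(\log M)^{n-1}$ term.

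The main obstacle is securing a uniform estimate of $U_I(M)$ whose dependence on $I$ is confined to the $O((\log M)^{n-2})$ remainder. This hinges on two ingredients: a reduction-of-generators argument using Dirichlet's theorem to balance the $|\sigma_i(x_I)|$, and a sufficiently sharp complex version of Theorem \ref{Loxton}; one must also verify that $\log N(I) = O(\log M)$ on the range of contributing ideals so that the induced distortion is genuinely absorbed into the lower-order terms.
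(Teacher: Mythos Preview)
Your proposal is correct and follows essentially the same route as the paper (which does not give its own proof but cites \cite[Section~4.C]{VLL2013}): the orbit decomposition \eqref{basicsum} by principal ideals, a uniform bound on $U_I(M)$ via the complex unit-count analogue of Theorem~\ref{Loxton}, and then summing against the Dedekind zeta function, with the normalization \eqref{normalized} supplying the factor $(2^{-n}\sqrt{|d(K)|})^{n_r}$. The only cosmetic difference is that the paper's cited argument packages the uniform unit bound as an input (cf.\ the proof of Proposition~\ref{complexsingle}), whereas you spell out the reduction-of-generators step via Dirichlet's theorem; both lead to the same estimate.
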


\begin{proposition}\label{complexsingle}
Let $K$ be a totally complex algebraic number field of degree $2n$. We then have
\begin{equation}\label{upper}
\tilde{S}_{\psi(\OO_K)}^{2}(M)\leq  c_K \left(\log{M}\right)^{n} + O((\log{M})^{n-1})
\end{equation}
and
\begin{equation}
\tilde{N}_K\left(\log{M}\right)^{n-1} + O((\log{M})^{n-2})\leq \tilde{S}_{\psi(\OO_K)}^{2}(M),
\end{equation}
where $c_K=\frac{h_K \pi^n 2 n^{n}}{(n-1)!}$ and $\tilde{N}_K = \frac{ \omega (n)^{n-1}}{R (n-1)!}(2^{-n}\sqrt{|d(K)|})$.
\end{proposition}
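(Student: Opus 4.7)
The plan is to mirror the argument of Proposition \ref{realsingle}, adapting the machinery for totally complex fields developed in \cite{VLL2013}. The key identity is that for $x \in \OO_K$ with $K$ totally complex of degree $2n$, one has $|\det\psi(x)|^2 = \prod_{i=1}^n |\sigma_i(x)|^2 = |N_{K/\Q}(x)| = N((x))$, so that each contribution $1/|\det\psi(x)|^2$ equals $1/N((x))$ and the sum is naturally indexed by principal ideals.

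First I would group the nonzero elements of $\OO_K$ with $\psi(x) \in B(M)$ by the principal ideal they generate --- two elements generate the same ideal iff they differ by a unit --- and apply AM--GM to $\|\psi(x)\|_F^2 = \sum_i|\sigma_i(x)|^2$ to obtain $N((x)) \leq M^{2n}/n^n$. This restricts the contributing ideals to those with $N(I) \leq M^{2n}/n^n$, and for a fixed generator $x_I$ the contributing elements are $x_I u$ with $u \in \OO_K^*$ and $\psi(x_I u) \in B(M)$. Invoking the complex analogue of Theorem \ref{Loxton} --- whose exponent is $n-1$ because the free rank of $\OO_K^*$ in a totally complex field of degree $2n$ is $n-1$ --- bounds the unit count uniformly by $\tilde{N}_K(\log M)^{n-1} + O((\log M)^{n-2})$, yielding
$$
\tilde{S}_{\psi(\OO_K)}^{2}(M) \leq \bigl(\tilde{N}_K(\log M)^{n-1} + O((\log M)^{n-2})\bigr)\,\zeta_K\!\left(1, M^{2n}/n^n\right).
$$

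Plugging Lemma \ref{tom} into the zeta factor gives $\zeta_K(1, M^{2n}/n^n) = 2n h_K \alpha_K \log M + O(1)$, so the leading term is $2n h_K \alpha_K \tilde{N}_K (\log M)^n$. Substituting the explicit expressions of $\alpha_K$ and $\tilde{N}_K$ for signature $(0,n)$, the factors $R$, $\omega$, $\sqrt{|d(K)|}$, $2^n$ all cancel and what remains is precisely $c_K = h_K \pi^n 2 n^n/(n-1)!$. The lower bound is immediate from the contribution of $I = (1)$ alone: each unit $u \in \OO_K^*$ with $\psi(u) \in B(M)$ contributes $1$ to the sum, and the Loxton-type count yields the claimed $\tilde{N}_K(\log M)^{n-1} + O((\log M)^{n-2})$ after the normalization \eqref{normalized}. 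I expect the main obstacle to be the uniform-in-$I$ control of the unit count on the weighted lattice $\psi(x_I)\cdot\psi(\OO_K^*)$ --- the shape of the weights must be shown not to inflate the leading constant in $(\log M)^{n-1}$ --- while the analytic improvement over \cite{VLL2013} is carried entirely by Lemma \ref{tom}.
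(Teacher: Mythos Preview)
Your proposal is correct and follows essentially the same route as the paper: the paper's proof simply invokes Proposition~4.4 of \cite{VLL2013} to obtain the bound $\tilde{S}_{\psi(\OO_K)}^{2}(M)\leq \zeta_K(1,M^{2n}/n^n)\bigl(\tilde{N}_K(\log M)^{n-1}+O((\log M)^{n-2})\bigr)$ and then plugs in Lemma~\ref{tom} and Theorem~\ref{Loxton}, which is exactly your grouping-by-principal-ideal argument with the AM--GM norm bound and uniform unit count. Your explicit verification of the constant $c_K$ and the unit-only lower bound are consistent with the paper, which leaves those computations implicit.
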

\begin{proof}
Following the proof of Proposition 4.4 in \cite{VLL2013} we have an estimate
$$
\tilde{S}_{\psi(\OO_K)}^{m}(M)\leq \zeta_K(1,\frac{M^{2n}}{n^n})(\tilde{N}_K(\log{M})^{n-1} + O((\log{M})^{n-2}),
$$
and the end result now follows from plugging  the result of Lemma \ref{tom} and Theorem \ref{Loxton} to this  estimate
\end{proof}
Again for principal ideal domains the upper bound is likely tight.

\section{Quasi-orthogonal codes from division algebras}\label{alamlike}

In the following we are considering the Alamouti-like multiblock codes from \cite{EK}. 
With respect to their complexity and other properties, all of the codes of this type are quasi-orthogonal. It is even possible to prove that many of the fully diverse quasi-orthogonal codes in the literature are unitarily equivalent to these multi-block codes. In the following we will use several 
results and concepts from the theory of central simple algebras. We  refer the reader to \cite{R} for an introduction to this theory.

Let us consider the field $E=KF$ that is a  compositum of  a  complex quadratic field $F$  and  a totally real Galois extension $K/\Q$ of degree $k$. We suppose  that $K \cap F=\Q$,  $Gal(F/\Q)= <\sigma>$ and $Gal(K/\Q)=\{\tau_1, \tau_2,\dots, \tau_k \}$. Here $\sigma$ is simply the complex conjugation. We can then write that $Gal(FK/\Q)=Gal(K/\Q)\otimes<\sigma>$.

Let us now consider  a cyclic division algebra
$$
\mathcal{D}=(E/K,\sigma,\gamma)=E\oplus uE,
$$
where   $u\in\mathcal{D}$ is an auxiliary generating element subject to the relations
$xu=ux^*$ for all $x\in E$ and $u^2=\gamma\in \OO_K$, where $()^*$ is the complex conjugation. 
We can consider $\D$ as a right  vector space over $E$ and every element   $a=x_1+ux_2 \in \D$ maps to
$$
\phi(a)=
\begin{pmatrix}
x_1& x_2\\
\gamma x_2^*&x_1^*
\end{pmatrix}.
$$
 
This mapping can then be extended into 
a multi-block representation $\phi:\D\mapsto M_{2k}(\C)$.
\begin{equation}
\psi(a)= \mathrm{diag}(\tau_1(\phi(a)),\tau_2(\phi(a))\dots, \tau_k(\phi(a))).
\end{equation}

\begin{exam}
In  the case where $k=2$ each element $a\in \D$ gets mapped as
$$
\psi(a)= 
\begin{pmatrix}
x_1& x_2& 0&0\\
\gamma x_2^*&x_1^*&0&0\\
0&0&\tau(x_1)&\tau(x_2)\\
0&0&\tau(\gamma x_2)^*&\tau(x_1)*\\
\end{pmatrix}.
$$
\end{exam}

In order to build a space-time lattice code from the division algebra $\D$ we will need  the following definition.

\begin{definition}
Let $\mathcal{O}_K$ be the ring of integers of $K$. An \emph{$\mathcal{O}_K$-order} $\Lambda$ in $\D$ is a subring of $\D$, having the same identity element as $\D$, and such that $\Lambda$ is a finitely generated module over $\mathcal{O}_K$ and generates $\D$ as a linear space over $K$. 
\end{definition}

Let us suppose that $\Lambda$ is an $\OO_K$-order  in $\D$. We call 
\emph{$\phi(\Lambda)$} an \emph{order code}. In the rest of this paper, we suppose that the division algebras under consideration are of the previous type.

\begin{lemma}\label{normindex}
If $\Lambda$ is an $\OO_K$-order in $\D$
\begin{equation}
|\det(\phi(x))|=\sqrt{[\Lambda:x\Lambda]},
\end{equation}
where $x$ is a non-zero element of $\Lambda$.
\end{lemma}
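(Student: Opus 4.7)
The plan is to deduce the identity from the standard relation between the reduced norm of a central simple algebra and the index of a principal right ideal in one of its orders, as developed in Reiner \cite{R}. Since the application of the lemma is to the inverse determinant sum over the full $2k\times 2k$ matrix actually used in the code, I interpret $\det(\phi(x))$ here as the determinant of the multi-block representation $\psi(x)$ of the example (the $2\times 2$ reduced-norm block lies in $M_2(E)$ and its determinant alone is an element of $K$, not a rational integer).

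First I would identify the determinant of the $2\times 2$ block with the reduced norm of $\D$: a direct computation gives $\det(\phi(x)) = x_1 x_1^* - \gamma\, x_2 x_2^* = \mathrm{Nrd}(x) \in K$, and $\mathrm{Nrd}(x) \in \OO_K$ whenever $x \in \Lambda$. Since the full multi-block matrix applies the Galois automorphisms $\tau_1,\ldots,\tau_k$ of $K/\Q$ to the entries of $\phi(x)$ in each diagonal block, the determinant of the whole $2k\times 2k$ matrix factors as
\bean
\det(\psi(x)) &=& \prod_{i=1}^k \tau_i(\mathrm{Nrd}(x)) \;=\; \mathrm{Nm}_{K/\Q}(\mathrm{Nrd}(x)) \in \Z.
\eean
This is the quantity whose absolute value I need to identify with $\sqrt{[\Lambda:x\Lambda]}$.

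Next I would compute the lattice index. The order $\Lambda$ is a full $\Z$-lattice of rank $4k$ in the $\Q$-vector space $\D$ (of $\Q$-dimension $[\D:K][K:\Q]=4k$). Left multiplication $L_x:y\mapsto xy$ is a $\Q$-linear endomorphism of $\D$ whose restriction sends $\Lambda$ injectively onto $x\Lambda$, so the elementary divisor theorem for $\Z$-lattices yields $[\Lambda:x\Lambda]=|\det_{\Q}(L_x)|$. Because $L_x$ is $K$-linear, $\det_{\Q}(L_x)=\mathrm{Nm}_{K/\Q}(\det_K(L_x))$. The essential input from central simple algebra theory is that the characteristic polynomial of $L_x$ on $\D$ (as a $K$-vector space of dimension $d^2=4$) equals the $d$-th power of the reduced characteristic polynomial, which for $d=2$ gives $\det_K(L_x)=\mathrm{Nrd}(x)^2$. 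Combining,
\bean
[\Lambda:x\Lambda] &=& \big|\mathrm{Nm}_{K/\Q}(\mathrm{Nrd}(x))\big|^2,
\eean
and taking square roots matches $|\det(\psi(x))|$ exactly.

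The main obstacle is the identity $\det_K(L_x)=\mathrm{Nrd}(x)^2$, but this is a textbook fact from the theory of central simple algebras that I would simply quote from Reiner \cite{R}. Conceptually it becomes transparent after extending scalars to a splitting field $\bar{K}$: the algebra becomes $M_2(\bar{K})$, left multiplication by $x$ corresponds to the block-diagonal matrix $\mathrm{diag}(\tilde{x},\tilde{x})$ acting on $M_2(\bar{K})\cong \bar{K}^2\oplus \bar{K}^2$, and its determinant is $\det(\tilde{x})^2=\mathrm{Nrd}(x)^2$. A minor technicality is that $\Lambda$ need not be a free $\OO_K$-module when $\OO_K$ fails to be a PID, but this is irrelevant here since the index is measured over $\Z$ and $\Lambda$ is always free of rank $4k$ as a $\Z$-module.
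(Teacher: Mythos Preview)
Your proof is correct and is precisely the standard argument the paper is tacitly relying on: the paper states this lemma without proof, treating it as a known consequence of the reduced-norm/index relation in Reiner \cite{R}, and your derivation via $\det_K(L_x)=\mathrm{Nrd}(x)^2$ together with $[\Lambda:x\Lambda]=|\det_{\Q}(L_x)|$ is exactly that. Your reading of $\phi$ as the full multi-block map $\psi$ is also the intended one, as confirmed by the paper's own later use of the lemma in the form $|\det(\psi(x))|^{2n_r}=[\Lambda:x\Lambda]^{n_r}$.
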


\begin{lemma}\label{perustulos}
Let us suppose that $\Lambda$ is  a $\OO_K$-order of a division algebra $\D$ with  center $K$ of degree $k$  and that  $\phi$ is a multi-block representation.
Then the order code $\phi(\Lambda)$ is a $4k$-dimensional  lattice in  the space $M_{2k}(\C)$ and
$$
\mindet{\psi(\Lambda)}=1.
$$
\end{lemma}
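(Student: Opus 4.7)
The plan is to handle the two assertions separately, using throughout that $\D$ is a central simple $K$-algebra of dimension $4$ over $K$ and that $K$ has degree $k$ over $\Q$, so $\dim_\Q \D = 4k$.

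For the lattice statement I would first observe that since $\Lambda$ is an $\OO_K$-order, it is a finitely generated torsion-free $\Z$-module that $\Q$-spans $\D$, hence a free $\Z$-module of rank $4k$. To see that $\psi(\Lambda)$ is a full-rank discrete lattice in $M_{2k}(\C)$, I would extend $\psi$ to an $\R$-algebra homomorphism $\psi_\R$ on $\D\otimes_\Q\R$. Using $K\otimes_\Q\R\cong\R^k$ via the $\tau_i$ and $F\otimes_\Q\R\cong\C$ (because $F$ is complex quadratic), the algebra $\D\otimes_\Q\R$ decomposes as a product of $k$ quaternion $\R$-algebras (each being either $M_2(\R)$ or $\Hh$), and $\psi_\R$ realizes the $i$-th factor as the $i$-th $2\times 2$ diagonal block in $M_{2k}(\C)$. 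Since each factor is simple and $\psi_\R$ is nonzero on every factor, $\psi_\R$ is injective, and its image is a $4k$-dimensional $\R$-subspace of $M_{2k}(\C)$ in which $\psi(\Lambda)=\psi_\R(\Lambda)$ sits as a rank $4k$ lattice.

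For the minimum determinant I would use the cyclic algebra structure: a direct computation on $\phi(a)=\begin{pmatrix}x_1 & x_2\\ \gamma x_2^* & x_1^*\end{pmatrix}$ for $a=x_1+ux_2$ gives
\[
\det(\phi(a))=x_1 x_1^*-\gamma\, x_2 x_2^*\in K,
\]
which is precisely the reduced norm $\mathrm{Nrd}(a)$ of the cyclic algebra $(E/K,\sigma,\gamma)$. From the block-diagonal form of $\psi(a)$ one then gets
\[
\det(\psi(a))=\prod_{i=1}^k\det(\tau_i(\phi(a)))=\prod_{i=1}^k\tau_i(\mathrm{Nrd}(a))=N_{K/\Q}(\mathrm{Nrd}(a)).
\]
For nonzero $a\in\Lambda$, integrality of $a$ over $\OO_K$ forces $\mathrm{Nrd}(a)\in\OO_K$ (it is integral over $\OO_K$ and lies in $K$), and nonzero because $\D$ is a division algebra; hence $N_{K/\Q}(\mathrm{Nrd}(a))$ is a nonzero rational integer, giving $|\det(\psi(a))|\geq 1$. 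Equality is attained at $a=1$, since $\psi(1)=I_{2k}$, so $\mindet{\psi(\Lambda)}=1$.

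The main technical point I expect to be the obstacle is the discreteness and full rank of $\psi(\Lambda)$ in $M_{2k}(\C)$; pushing this cleanly requires invoking the decomposition of $\D\otimes_\Q\R$ and the injectivity of $\psi_\R$ on each simple factor. Once this is in place, the minimum determinant computation is a one-line consequence of the cyclic-algebra reduced norm formula together with the standard integrality of $\mathrm{Nrd}(a)$ for $a$ in an $\OO_K$-order.
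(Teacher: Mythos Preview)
The paper states this lemma without proof, treating it as a standard fact about multi-block order codes (cf.\ \cite{EK}, \cite{VLL2013}, \cite{R}); there is therefore no paper-side argument to compare against line by line.

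Your proposal is correct and is the expected standard proof. The lattice part---tensoring with $\R$, using $K\otimes_\Q\R\cong\R^k$ to split $\D\otimes_\Q\R$ into $k$ real quaternion factors, and then invoking simplicity of each factor to get injectivity of $\psi_\R$---is exactly how one establishes that $\psi(\Lambda)$ is discrete of rank $4k$. For the minimum determinant, your identification $\det(\phi(a))=x_1x_1^*-\gamma x_2x_2^*=\mathrm{Nrd}_{\D/K}(a)$ and hence $\det(\psi(a))=N_{K/\Q}(\mathrm{Nrd}_{\D/K}(a))$ is the standard computation; integrality of the reduced norm on an $\OO_K$-order then gives $|\det(\psi(a))|\in\Z_{>0}$, with equality at $a=1$. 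Note that the paper's Lemma~\ref{normindex}, $|\det(\psi(x))|^2=[\Lambda:x\Lambda]$, furnishes an alternative one-line route to $\mindet{\psi(\Lambda)}=1$, but your reduced-norm argument is equivalent and arguably more transparent. Nothing is missing.
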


Let $\D$ be an index-$n$ $K$-central division algebra and $\Lambda$  a $\OO_K$-order in $\D$.
The  (right) \emph{Hey zeta function}  of the order $\Lambda$ is
$$
\zeta_{\Lambda}(s)=\sum_{I\in {\bf I}_{\Lambda}}\frac{1}{[\Lambda:I]^{s}},
$$
where $\Re(s)>1$ and ${\bf I}_{\Lambda}$ is the set of right ideals of $\Lambda$. When $\Re(s) >1$, this series is converging.

The unit group $\Lambda^*$ of an order $\Lambda$  consists of elements $x\in \Lambda$ such that there exists a $y\in \Lambda$ with 
$xy=1_{\A}$.  Another way to define this set is $\Lambda^*=\{x\in \Lambda \,|\, |\det{\psi(x)}|=1\}$.

\subsection{Inverse determinant sums of quasi-orthogonal codes}

Let us suppose that $K$, $\D$ and $\Lambda$ are as in the previous section and that $[K:\Q]=k$. We then have that
$\phi(\Lambda)$ is  a $4k$-dimensional NVD lattice in $M_{2k}(\C)$ and we can consider 
the growth of the sum
$$
\sum_{\psi(x)\in \psi(\Lambda)(M)}\frac{1}{|\det{\psi(x)}|^{2n_r}}=S^{2n_r}_{\psi(\Lambda)}(M).
$$

Just as in \cite{VLL2013} the previous sum can be analyzed further into
\begin{equation} \label{basicsum}
S^{2n_r}_{\psi(\Lambda)}(M)=\sum_{x\in X(M)}\frac{|\psi(x\Lambda^*) \cap B(M)|}{|\det(\psi(x))|^{2n_r}},
\end{equation}
where $X(M)$ is some collection of elements $x \in \Lambda$ such that $\norm{\psi(x)}_F\leq M$, each generating a different right ideal.

\subsection{ Uniform upper and lower bounds for $|\psi(x\Lambda^*)\cap B(M)|$}
The key element in the analysis  of $|\psi(x\Lambda^*)\cap B(M)|$ is the following.

\begin{lemma}[Eichler]
The unit group $\Lambda^*$ has a subgroup
$$
\OO_K^*=\{x \,|\,x\in \Lambda^*,  x\in \OO_K \},
$$
and we have $[\Lambda^*:\OO_K^*]<\infty$.
\end{lemma}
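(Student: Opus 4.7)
The plan is to use the reduced norm $\nu:\D \to K$ of the quaternion algebra together with the totally definite nature of $\D$ to reduce the finiteness of $[\Lambda^*:\OO_K^*]$ to two simple finiteness statements.

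First I would verify the inclusion $\OO_K^* \subseteq \Lambda^*$. Since $\Lambda$ is an $\OO_K$-order containing $1_{\D}$, we have $\OO_K \cdot 1_{\D} \subseteq \Lambda$, so any $u \in \OO_K^*$ has $u^{-1} \in \OO_K \subseteq \Lambda$, forcing $u \in \Lambda^*$. Restricting the reduced norm then gives a group homomorphism $\nu:\Lambda^* \to \OO_K^*$, since $\nu(\Lambda) \subseteq \OO_K$ and for $x \in \Lambda^*$ the identity $\nu(x)\nu(x^{-1}) = 1$ forces $\nu(x) \in \OO_K^*$. For central $u \in \OO_K^*$ the reduced norm equals $u^2$, so $\nu(\OO_K^*) = (\OO_K^*)^2$, which has finite index $2^k$ in $\OO_K^*$ by Dirichlet's unit theorem applied to the totally real field $K$ of degree $k$.

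The key step is to show that $\Lambda^1 := \ker(\nu|_{\Lambda^*})$ is finite. At each real embedding $\tau_i$ of $K$ the completion $\D \otimes_{K,\tau_i} \R$ is a quaternion algebra over $\R$ containing $E \otimes_{K,\tau_i} \R \cong \C$, with the generator $u$ satisfying $u^2 = \tau_i(\gamma)$. Under the natural choice for such Alamouti-type constructions that $\gamma$ is totally negative, each completion is the Hamilton quaternion algebra $\Hh$, and the reduced-norm-one group $\prod_i (\D \otimes_{K,\tau_i} \R)^1$ is a compact product of $SU(2)$ factors. The image of $\Lambda^1$ under $\psi$ embeds as a discrete subgroup of this compact group, hence $\Lambda^1$ is finite.

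Putting the pieces together, the map $\Lambda^*/\OO_K^* \to \OO_K^*/(\OO_K^*)^2$ induced by $\nu$ has kernel $\Lambda^1 \OO_K^*/\OO_K^* \cong \Lambda^1/(\Lambda^1 \cap \OO_K^*)$ and image contained in $\OO_K^*/(\OO_K^*)^2$; both are finite, so $[\Lambda^*:\OO_K^*] < \infty$. The principal obstacle is the finiteness of $\Lambda^1$: the argument would fail if $\D$ were split at some archimedean place (in which case $\Lambda^1$ would become an infinite arithmetic group), and it is precisely the totally definite structure of the codes under consideration that makes Eichler's lemma applicable.
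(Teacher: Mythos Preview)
The paper does not supply its own proof of this lemma; it simply attributes the result to Eichler and uses it as a black box. So there is no argument in the paper to compare against.

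Your sketch is the standard proof and is correct. The reduced norm gives a homomorphism $\Lambda^* \to \OO_K^*$ landing (modulo $(\OO_K^*)^2$) in a finite group, and the kernel $\Lambda^1$ is a discrete subgroup of the compact group $\prod_i (\D\otimes_{K,\tau_i}\R)^1 \cong SU(2)^k$, hence finite; the index then follows. Two small remarks. First, you correctly flag that the compactness step needs $\D$ to be ramified at every real place of $K$, i.e.\ $\tau_i(\gamma)<0$ for all $i$; the paper never states this hypothesis explicitly, but the lemma is false otherwise (if $\D$ splits at some infinite place then $\Lambda^1$ contains an infinite Fuchsian-type group), so the authors are tacitly assuming the totally definite case, which is indeed the Alamouti-like setting referenced in Section~\ref{alamlike}. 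Second, you do not need the exact value $2^k$ for $[\OO_K^*:(\OO_K^*)^2]$; finite generation of $\OO_K^*$ suffices, though your computation is correct.
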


Let $j=[\Lambda^*:\OO_K]$. By choosing a set $\{a_1,\ldots,a_j\}$ of coset leaders of $\OO_K^*$ in $\Lambda^*$, we   have  that
\begin{equation}\label{}
 |\psi(x\Lambda^*)\cap B(M)|\leq \sum_{i=1}^j |\psi(xa_i\OO_K^*)\cap B(M)|.
\end{equation}
In order to give an uniform upper bound for $|\psi(x\Lambda^*)\cap B(M)|$, it is now enough to give a uniform upper bound for $|\psi(xa_i\OO_K^*)\cap B(M)|$. Before stating our main results we need few lemmas. We will skip the proofs of some of them.

\begin{lemma}\label{basic}
Let us suppose that $A$ is a diagonal matrix in $M_n(\C)$ with $|\det{A}|\geq 1$. We then have that
$$
|A\psi(\OO_K^*)\cap B(M)|\leq |\psi(\OO_K^*)\cap B(cM)|,
$$
where $c$ is a fixed constant, independent of $A$ and $M$.
\end{lemma}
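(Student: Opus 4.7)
The plan is to exploit the diagonal form of $\psi(\OO_K^*)$ together with Dirichlet's unit theorem to construct an $A$-dependent unit shift $u_0$ that reduces the count to an unshifted sphere of comparable radius. First, for $u \in \OO_K^*$, since $u \in K$ is totally real we have $\phi(u) = \diag(u,u)$, and hence $\psi(u) = \diag(\tau_1(u),\tau_1(u),\ldots,\tau_k(u),\tau_k(u))$. Writing $A = \diag(a_1,\ldots,a_{2k})$ and setting $b_i := |a_{2i-1}|^2 + |a_{2i}|^2$, we obtain
$$\|A\psi(u)\|_F^2 \;=\; \sum_{i=1}^k b_i\,|\tau_i(u)|^2.$$
By AM--GM, $b_i \geq 2|a_{2i-1}a_{2i}|$, so $\prod_{i=1}^k b_i \geq 2^k|\det A| \geq 2^k$ and $\beta := \tfrac{1}{2}\log\prod_i b_i \geq \tfrac{k\log 2}{2}$.

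The key balancing step uses Dirichlet's unit theorem: the logarithmic embedding $u \mapsto (\log|\tau_1(u)|,\ldots,\log|\tau_k(u)|)$ sends $\OO_K^*$ onto a full-rank lattice $\mathcal{L}$ in the hyperplane $H = \{x \in \R^k : \sum_i x_i = 0\}$, whose covering radius in the sup norm is some finite $R = R(K)$ depending only on $K$. The vector $t_i := \beta/k - \tfrac{1}{2}\log b_i$ lies in $H$ since $\sum_i t_i = 0$, so we can pick a unit $u_0 \in \OO_K^*$ with $|\log|\tau_i(u_0)| - t_i| \leq R$ for each $i$. Exponentiating gives
$$b_i\,|\tau_i(u_0)|^2 \;\in\; \bigl[e^{2\beta/k - 2R},\; e^{2\beta/k + 2R}\bigr] \;\subseteq\; \bigl[\,2e^{-2R},\; \infty\bigr),$$
where the final inclusion uses $2\beta/k \geq \log 2$.

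Setting $u' := u u_0^{-1}$, which is a bijection of $\OO_K^*$, we have $|\tau_i(u)|^2 = |\tau_i(u_0)|^2\,|\tau_i(u')|^2$, and hence for any $u$ with $\|A\psi(u)\|_F \leq M$,
$$M^2 \;\geq\; \sum_{i=1}^k \bigl(b_i\,|\tau_i(u_0)|^2\bigr)\,|\tau_i(u')|^2 \;\geq\; 2e^{-2R}\sum_{i=1}^k |\tau_i(u')|^2 \;=\; e^{-2R}\,\|\psi(u')\|_F^2.$$
Therefore $\|\psi(u')\|_F \leq cM$ with $c := e^R$ depending only on $K$, which injects $\{u \in \OO_K^* : A\psi(u) \in B(M)\}$ into $\{u' \in \OO_K^* : \psi(u') \in B(cM)\}$ and yields the claim. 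The main obstacle is the balancing step: the hypothesis $|\det A| \geq 1$ is used precisely to obtain $\beta \geq \tfrac{k\log 2}{2}$, which is what makes the shifted coefficients $b_i|\tau_i(u_0)|^2$ bounded below by a constant independent of $A$; without it, the lower bound on the shifted radius would degenerate.
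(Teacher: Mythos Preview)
Your proof is correct. The paper explicitly skips the proof of this lemma (``We will skip the proofs of some of them''), so there is no argument to compare against directly. Your approach via Dirichlet's unit theorem --- using the covering radius of the logarithmic lattice in the trace-zero hyperplane to select a balancing unit $u_0$, then translating by $u_0^{-1}$ to obtain a uniform lower bound on the weighted coefficients $b_i|\tau_i(u_0)|^2$ --- is the standard technique for such estimates and is the method underlying the analogous results in \cite{VLL2013}; it is almost certainly what the authors had in mind. The computation is clean: the identification $\|\psi(u')\|_F^2 = 2\sum_i |\tau_i(u')|^2$ is correct for the multi-block embedding restricted to $\OO_K^*$, the AM--GM step correctly extracts $\prod_i b_i \geq 2^k|\det A|$, and the injection $u \mapsto uu_0^{-1}$ is well-defined since $A$ is invertible. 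Your closing remark about where the hypothesis $|\det A|\geq 1$ enters is also on point.
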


\begin{lemma}\label{approx}
Let us suppose that  $x$ and $y$ are elements in $\OO_{KF}$, we then have that
$$
|\psi(x)\psi(\OO_K^*)\cap B(M)|\leq |\psi(\OO_K^*)\cap B(cM)|,
$$
and
$$
|\psi(uy)\psi(\OO_K^*)\cap B(M)|\leq |\psi(\OO_K^*)\cap B(cM)|,
$$
where $c$ is a real constant independent of $x, y$ and $M$.
\end{lemma}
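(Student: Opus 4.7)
The plan is to reduce both inequalities to Lemma \ref{basic} by producing, in each case, a diagonal matrix $A\in M_{2k}(\C)$ with $|\det A|\geq 1$ whose Frobenius-norm action on $\psi(\OO_K^*)$ coincides with that of the prescribed left multiplier.

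For the first inequality the matrix $\psi(x)$ is itself already diagonal: writing $x\in E\subset\D$ as an element with $x_2=0$, one has $\phi(x)=\diag(x,x^*)$, so $\psi(x)$ is block-diagonal with $i$-th block $\diag(\tau_i(x),\tau_i(x^*))$. The element $xx^*$ is fixed by the complex conjugation $\sigma$ and hence lies in $E^{\langle\sigma\rangle}=K$; being a product of algebraic integers it belongs to $\OO_K\setminus\{0\}$. Consequently
$$
|\det\psi(x)|=\prod_{i=1}^{k}|\tau_i(x)\tau_i(x^*)|=|N_{K/\Q}(xx^*)|\geq 1,
$$
and Lemma \ref{basic} applied with $A=\psi(x)$ yields the first bound with the universal constant $c$ provided there.

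For the second inequality the obstacle is that the $i$-th $2\times 2$ block of $\psi(uy)$ is antidiagonal, with entries $\tau_i(y)$ in position $(1,2)$ and $\tau_i(\gamma y^*)$ in position $(2,1)$. The idea is to replace $\psi(uy)$ by the diagonal matrix $D$ whose $i$-th block is $\diag(\tau_i(\gamma y^*),\tau_i(y))$. Since $v\in\OO_K^*$ is totally real, $\psi(v)$ is block-diagonal with $i$-th block $\diag(\tau_i(v),\tau_i(v))$; a direct blockwise computation then shows that the products $\psi(uy)\psi(v)$ and $D\psi(v)$ carry exactly the same nonzero scalars $\tau_i(y)\tau_i(v)$ and $\tau_i(\gamma y^*)\tau_i(v)$ in each block, merely in permuted positions, so that
$$
\norm{\psi(uy)\psi(v)}_F=\norm{D\psi(v)}_F.
$$
Since both $v\mapsto\psi(uy)\psi(v)$ and $v\mapsto D\psi(v)$ are injective on $\OO_K^*$, this identity of norms yields
$$
|\psi(uy)\psi(\OO_K^*)\cap B(M)|=|D\psi(\OO_K^*)\cap B(M)|.
$$

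The remaining point, and the main check, is the determinant condition $|\det D|\geq 1$. A block-by-block computation gives $\det D=\prod_{i=1}^{k}\tau_i(\gamma yy^*)=N_{K/\Q}(\gamma yy^*)$, and by the Galois argument of the previous paragraph applied to $y\in\OO_E$ combined with $\gamma\in\OO_K\setminus\{0\}$, the element $\gamma yy^*$ lies in $\OO_K\setminus\{0\}$, so $|N_{K/\Q}(\gamma yy^*)|\geq 1$. Lemma \ref{basic} applied with $A=D$ then completes the proof; since the constant $c$ furnished there does not depend on $A$, it does not depend on $x$ or $y$ either, which is the required uniformity.
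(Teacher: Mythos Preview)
Your argument is correct. For the first inequality you do exactly what the paper does: note that $\psi(x)$ is diagonal with $|\det\psi(x)|\ge 1$ and invoke Lemma~\ref{basic}. Your additional justification that $xx^*\in\OO_K\setminus\{0\}$ and hence $|N_{K/\Q}(xx^*)|\ge 1$ is a welcome spelling-out of what the paper leaves implicit.

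For the second inequality you take a route different from the paper's. The paper simply observes that $\psi(u)$ is a fixed invertible matrix: writing $\psi(uy)=\psi(y)\psi(u)$ and using that $\psi(u)$ commutes with $\psi(\OO_K^*)$ (since $K$ is central), one gets $\psi(uy)\psi(\OO_K^*)=\psi(y)\psi(\OO_K^*)\psi(u)$; right-multiplication by the fixed invertible $\psi(u)$ distorts Frobenius norms by at most a constant factor, reducing the bound to the already-proved first inequality applied to $\psi(y)$. Your approach instead replaces the antidiagonal blocks of $\psi(uy)$ by the diagonal matrix $D$ carrying the same entries, checks that $\psi(uy)\psi(v)$ and $D\psi(v)$ have identical Frobenius norms for every $v\in\OO_K^*$, verifies $|\det D|=|N_{K/\Q}(\gamma yy^*)|\ge 1$, and then applies Lemma~\ref{basic} directly to $D$. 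Both arguments are short and valid; the paper's is slicker because it avoids any new computation, while yours is more explicit and makes the reduction to Lemma~\ref{basic} completely transparent without relying on the multiplicativity of $\psi$.
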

\begin{proof}
The first result is simply Lemma \ref{basic} and the second follows as $\psi(u)$ is a fixed matrix.
\end{proof}

\begin{lemma}\label{orto}
Let us suppose that $x$ and  $y$ are elements in $E$. We then have that
$$
||\psi(x)+\psi(uy)||_F^2 = ||\psi(x)||_F^2+||\psi(uy)||_F^2.
$$

\end{lemma}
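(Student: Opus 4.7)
The plan is to exploit the block structure of $\phi$ and $\psi$ to reduce the identity to a trivial disjoint-support observation. First I would unfold the definitions on a single $2\times 2$ block: for $a=x$ we have $x_2=0$ so $\phi(x)=\diag(x,x^*)$ is diagonal, while for $a=uy$ we have $x_1=0$ so
$$
\phi(uy)=\begin{pmatrix}0 & y\\ \gamma y^* & 0\end{pmatrix}
$$
is anti-diagonal. The nonzero entry positions are thus the two sets $\{(1,1),(2,2)\}$ and $\{(1,2),(2,1)\}$, which are disjoint. Therefore on the single-block level
$$
\|\phi(x)+\phi(uy)\|_F^2=\|\phi(x)\|_F^2+\|\phi(uy)\|_F^2,
$$
simply because the Frobenius norm squared is the sum of the squared moduli of the entries, and no two nonzero entries of $\phi(x)$ and $\phi(uy)$ share a position.

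Next I would lift this to $\psi$ by observing that $\psi(a)=\diag(\tau_1(\phi(a)),\ldots,\tau_k(\phi(a)))$ is block diagonal with $k$ blocks of size $2\times 2$ sitting in disjoint block positions. Applying a Galois automorphism $\tau_i$ entry-wise preserves the diagonal/anti-diagonal structure, so each $\tau_i(\phi(x))$ remains diagonal and each $\tau_i(\phi(uy))$ remains anti-diagonal. Writing the Frobenius norm squared of a block-diagonal matrix as the sum of the Frobenius norms squared of the blocks,
$$
\|\psi(x)+\psi(uy)\|_F^2=\sum_{i=1}^{k}\|\tau_i(\phi(x))+\tau_i(\phi(uy))\|_F^2,
$$
and applying the single-block identity inside each term splits the sum as
$$
\sum_{i=1}^{k}\|\tau_i(\phi(x))\|_F^2+\sum_{i=1}^{k}\|\tau_i(\phi(uy))\|_F^2=\|\psi(x)\|_F^2+\|\psi(uy)\|_F^2,
$$
which is the claim.

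There is no real obstacle here: the lemma is a structural consequence of how $\phi$ embeds $E\oplus uE$ into $M_2(\C)$, placing the two summands into orthogonal coordinate subspaces of $M_{2k}(\C)$ with respect to the Frobenius inner product. The only thing to be mildly careful about is noting that the Galois images $\tau_i$ act only on the matrix entries and do not mix the diagonal and anti-diagonal positions, which is immediate from the definition of $\psi$.
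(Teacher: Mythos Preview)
Your argument is correct and is essentially the paper's proof made explicit: the paper simply states that an elementary calculation gives $\langle \psi(x),\psi(uy)\rangle=0$, and your disjoint-support observation (diagonal versus anti-diagonal blocks) is exactly that calculation.
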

\begin{proof}
By an elementary calculation we see that $<\psi(x),\psi(uy)>=0$ and the claim follows.
\end{proof}

\begin{proposition}\label{mainhelp}
Let us suppose that $x\in \Lambda$, we then have that
$$
|\psi(x)\psi(\OO_K^*)\cap B(M)|\leq |\psi(\OO_K^*)\cap B(cM)|,
$$
where $c$ is a constant independent of $M$ and $x$.
\end{proposition}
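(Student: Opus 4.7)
The plan is to split $x$ along the decomposition $\D = E \oplus uE$ and then use the orthogonality of Lemma \ref{orto} to reduce the count to the two statements already established in Lemma \ref{approx}.

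First I would write $x = x_1 + u x_2$ with $x_1, x_2 \in E$; since $x \neq 0$, at least one of $x_1, x_2$ is nonzero. For any $\epsilon \in \OO_K^*$, the element $\epsilon$ lies in $K$ and is therefore fixed by the involution $*$, so $u\epsilon = \epsilon u$ and $x\epsilon = x_1 \epsilon + u(x_2 \epsilon)$ with $x_1 \epsilon,\, x_2\epsilon \in E$. Applying Lemma \ref{orto} to this decomposition would give
$$
\|\psi(x)\psi(\epsilon)\|_F^2 \;=\; \|\psi(x_1)\psi(\epsilon)\|_F^2 + \|\psi(u x_2)\psi(\epsilon)\|_F^2,
$$
so the containment $\psi(x)\psi(\epsilon) \in B(M)$ forces both $\psi(x_1)\psi(\epsilon) \in B(M)$ and $\psi(ux_2)\psi(\epsilon) \in B(M)$ simultaneously.

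Next I would pick whichever of $x_1, x_2$ is nonzero and invoke the corresponding part of Lemma \ref{approx}: the number of $\epsilon \in \OO_K^*$ satisfying the retained containment is at most $|\psi(\OO_K^*)\cap B(cM)|$ for some constant $c$ independent of the element and of $M$. Since every admissible $\epsilon$ must satisfy the retained constraint, this would close the argument.

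The hard part, and the only non-routine step, is that Lemma \ref{approx} is stated for $x_1, x_2 \in \OO_{KF} = \OO_E$, whereas an element of a general $\OO_K$-order $\Lambda$ need not have its coordinates in $\OO_E$ under the $E \oplus uE$ splitting. To handle this I would use that $\Lambda$ is a finitely generated $\OO_K$-module inside $\D$, hence commensurable with the natural order $\OO_E \oplus u\OO_E$; in particular there is a fixed positive integer $N$, depending only on $\Lambda$, such that $N\Lambda \subseteq \OO_E \oplus u\OO_E$. Applying Lemma \ref{approx} to $Nx_i \in \OO_E$ with the radius $NM$ in place of $M$ then yields the desired bound with constant $cN$, still uniform in $x$ and $M$.
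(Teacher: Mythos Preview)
Your proof is correct and follows essentially the same route as the paper: decompose $x=x_1+ux_2$, use the orthogonality of Lemma \ref{orto} to project the containment $\psi(x)\psi(\epsilon)\in B(M)$ onto each component, apply Lemma \ref{approx} to the nonzero component, and handle a general order via a fixed integer $N$ with $N\Lambda\subseteq \OO_E\oplus u\OO_E$. The only cosmetic difference is that the paper bounds by the maximum of the two component counts rather than selecting a nonzero one, and states the commensurability at the level of $\psi(\Lambda)$; the substance is identical.
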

\begin{proof}
Let us suppose first that  $x= x_1+ux_2$, where $x_i\in \OO_{E}$ and where $u^2\in \OO_K$.
According to Lemma \ref{orto}, we have that
$$
||\psi(x)\psi(y)||^2= ||\psi(x_1)\psi(y)||^2 + ||\psi(ux_2)\psi(y)||^2,
$$
for any  $y \in \OO_{E}$.

Therefore if $\psi(x)\psi(y)\in B(M)$, then also
$$
\psi(x_1)\psi(y) \in \, B(M) \, \mathrm{and} \, \psi(ux_2)\psi(y) \, \in B(M).
$$

It follows that we can upper bound $|\psi(x)\psi(\OO_K^*)\cap B(M)|$ with
 $$
 \mathrm{max}\{|\psi(x_1)\psi(\OO_K^*) \cap B(M)|, |\psi(ux_2)\psi(\OO_K^*) \cap B(M)| \}.
$$

According to lemma \ref{approx} we then have that
$$
|\psi(x)\psi(\OO_K^*)\cap B(M)|\leq|\psi(\OO_K^*)\cap B(cM)|.
$$
Let us now suppose that $\Lambda$ is a general order in $\D$. As $\psi(\Lambda)$ is 
finitely generated as an additive group in $M_n(E)$, we can choose an integer
$d$ such that $d\psi(\Lambda) \subseteq \psi(\OO_{E})+\psi(u\OO_{E})$. 
The result now follows from the previous consideration.
\end{proof}

\begin{proposition}\label{uniformmain}
Using the previous notation we have
\begin{eqnarray*}
|\psi(x\Lambda^*)\cap B(M)| \leq [\Lambda^*:\OO_K] \cdot\\
\log{M}^{k-1} \frac{ \omega (k)^{k-1}}{R (k-1)!}+O(\log{M}^{k-2}).
\end{eqnarray*}
\end{proposition}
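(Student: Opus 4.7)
The plan is to chain together the coset decomposition supplied by Eichler's lemma, the uniform bound of Proposition~\ref{mainhelp}, and the unit-group density estimate of Theorem~\ref{Loxton}, with the leading coefficient being computed precisely and lower-order terms absorbed into $O((\log M)^{k-2})$.

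First I would fix a set $\{a_1,\dots,a_j\}$ of coset representatives of $\OO_K^*$ in $\Lambda^*$, where $j=[\Lambda^*:\OO_K^*]$. Writing $\Lambda^* = \bigcup_{i=1}^j a_i \OO_K^*$ and multiplying on the left by $x$, we obtain the bound
\[
|\psi(x\Lambda^*)\cap B(M)|\ \leq\ \sum_{i=1}^{j} |\psi(xa_i\,\OO_K^*)\cap B(M)|.
\]
This reduces everything to a uniform estimate on a single coset.

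Next I would apply Proposition~\ref{mainhelp} to each element $xa_i\in\Lambda$. That proposition produces a single absolute constant $c$, independent of both $M$ and the element of $\Lambda$ multiplying $\psi(\OO_K^*)$ from the left, so we get
\[
|\psi(xa_i\,\OO_K^*)\cap B(M)|\ \leq\ |\psi(\OO_K^*)\cap B(cM)|
\]
for every $i$. At this point the dependence on $x$ has been entirely eliminated.

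Finally I would invoke Theorem~\ref{Loxton} for the totally real field $K$ of degree $k$, which gives
\[
|\psi(\OO_K^*)\cap B(cM)| = N_K\bigl(\log(cM)\bigr)^{k-1} + O\bigl((\log(cM))^{k-2}\bigr),
\]
with $N_K=\frac{\omega k^{k-1}}{R(k-1)!}$. Expanding $\log(cM)=\log M+\log c$ and using the binomial theorem, the leading term equals $N_K(\log M)^{k-1}$ and the correction from $\log c$ contributes only to $O((\log M)^{k-2})$. Summing over the $j=[\Lambda^*:\OO_K^*]$ cosets yields
\[
|\psi(x\Lambda^*)\cap B(M)|\ \leq\ [\Lambda^*:\OO_K^*]\,\frac{\omega k^{k-1}}{R(k-1)!}(\log M)^{k-1} + O((\log M)^{k-2}),
\]
which is the claimed inequality (the notation $[\Lambda^*:\OO_K]$ in the statement being a shorthand for $[\Lambda^*:\OO_K^*]$ defined via Eichler's lemma). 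The only real subtlety is verifying that the constant $c$ from Proposition~\ref{mainhelp} can indeed be chosen uniformly over all $x\in\Lambda$ (in particular over all the $xa_i$), but that is exactly what was established in the preceding proposition, so the argument is routine once the ingredients are lined up.
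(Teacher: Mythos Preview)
Your proof is correct and follows essentially the same route as the paper: coset decomposition via Eichler's lemma, then Proposition~\ref{mainhelp} applied to each $xa_i$, then Theorem~\ref{Loxton}. If anything you are slightly more explicit than the paper in expanding $\log(cM)$ and absorbing the contribution of the constant $c$ into the error term.
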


\begin{proof}
Let $j=[\Lambda^*:\OO_K]$. By choosing a set $\{a_1,\ldots,a_j\}$ of coset leaders of $\OO_K^*$ in $\Lambda^*$, we then  have  that
$$
 |\psi(x\Lambda^*)\cap B(M)|\leq \sum_{i=1}^j |\psi(xa_i\OO_K^*)\cap B(M)|.
$$
According to Proposition  \ref{mainhelp} we then have that
$$
|\psi(x\Lambda^*)\cap B(M)| \leq [\Lambda^*:\OO_K]||\psi(\OO_K^*)\cap B(cM)|.
$$
Applying  Theorem \ref{Loxton} to this equation, we get the final result.
\end{proof}

\subsection{Upper and lower bounds for inverse determinant sums of quasi-orthogonal codes}

\begin{proposition}\label{}
Let us suppose that $[K:\Q]=k$ and set $n=2k$. We then have that $\psi(\Lambda)$ is a $2n$-dimensional lattice in $M_{n}(\C)$ and
$$
\log{M}^{n/2-1} \frac{ \omega {(\frac{n}{2})}^{n/2-1}}{R (n/2-1)!}+O(\log{M}^{n/2-2})\leq S^{2n_r}_{\psi(\Lambda)}(M) 
$$
$$
\leq\zeta_{\Lambda}(n_r)[\Lambda^*:\OO_K] \log{M}^{\frac{n-2}{2}} \frac{ \omega (\frac{n}{2})^{\frac{n-2}{2}}}{R (\frac{n-2}{2})!}+O(\log{M}^{n/2-2}),
$$
where $n_r>1$ and $R$ and $\omega$ are the regulator and the number of roots of unity in the center $K$.
\end{proposition}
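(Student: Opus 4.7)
The plan is to bound $S^{2n_r}_{\psi(\Lambda)}(M)$ from above via the decomposition \eqref{basicsum} combined with the uniform estimate of Proposition \ref{uniformmain}, and from below by discarding every term except those coming from the central units.

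For the lower bound, I would start from the inclusion $\OO_K^*\subseteq \Lambda^*\subseteq \Lambda$. Each $x\in\OO_K^*$ satisfies $|\det\psi(x)|=1$, so its contribution to $S^{2n_r}_{\psi(\Lambda)}(M)$ equals $1$ as soon as $\|\psi(x)\|_F\leq M$. Restricted to $\OO_K$, the multi-block map sends $x\in\OO_K$ to $\diag(\tau_1(x)I_2,\dots,\tau_k(x)I_2)$, whose Frobenius norm squared is $2\sum_i \tau_i(x)^2$, i.e.\ a fixed factor $\sqrt 2$ times the norm under the standard real embedding of $\OO_K$ into $M_k(\R)$. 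Theorem \ref{Loxton} applied at radius $M/\sqrt 2$ then gives
$$
S^{2n_r}_{\psi(\Lambda)}(M)\;\geq\;|\psi(\OO_K^*)\cap B(M)|\;=\;\frac{\omega k^{k-1}}{R(k-1)!}(\log M)^{k-1}+O((\log M)^{k-2}),
$$
the rescaling $\log(M/\sqrt 2)=\log M+O(1)$ being absorbed into the error. Substituting $k=n/2$ produces the announced lower bound.

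For the upper bound, starting from \eqref{basicsum} I would factor the uniform estimate of Proposition \ref{uniformmain} out of the sum over $X(M)$:
$$
S^{2n_r}_{\psi(\Lambda)}(M)\;\leq\;\Bigl([\Lambda^*:\OO_K]\,\tfrac{\omega k^{k-1}}{R(k-1)!}(\log M)^{k-1}+O((\log M)^{k-2})\Bigr)\sum_{x\in X(M)}\frac{1}{|\det\psi(x)|^{2n_r}}.
$$
Since elements of $X(M)$ generate pairwise distinct right ideals of $\Lambda$, and $|\det\psi(x)|^2=[\Lambda:x\Lambda]$ (Lemma \ref{normindex} combined with the multi-block structure of $\psi$), the residual sum is dominated by the Hey zeta function $\zeta_\Lambda(n_r)$, which converges since $n_r>1$. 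Substituting $k=n/2$ then yields the stated upper bound, the $O((\log M)^{n/2-2})$ term arising from the lower-order piece of the uniform count multiplied by the constant $\zeta_\Lambda(n_r)[\Lambda^*:\OO_K]$.

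The delicate point I expect to spend the most care on is the genuine uniformity in $x$ of Proposition \ref{uniformmain}: factoring the count out of the sum over $X(M)$ is legitimate only because the implicit constants there do not depend on $x$, which is exactly what the preparatory Lemmas \ref{basic}, \ref{approx}, \ref{orto} and Proposition \ref{mainhelp} were built to guarantee. A secondary but routine point is justifying the identity $|\det\psi(x)|^2=[\Lambda:x\Lambda]$ for the multi-block $\psi$; this follows from $\det\psi(x)=N_{K/\Q}(\mathrm{Nrd}(x))$ together with $[\Lambda:x\Lambda]=|N_{K/\Q}(\mathrm{Nrd}(x))|^2$ via standard reduced-norm theory for quaternion orders.
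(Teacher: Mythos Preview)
Your proposal is correct and follows essentially the same route as the paper: the upper bound is obtained exactly as you describe, by factoring the uniform estimate of Proposition~\ref{uniformmain} out of the decomposition \eqref{basicsum} and bounding the residual sum by $\zeta_{\Lambda}(n_r)$ via Lemma~\ref{normindex}. Your treatment of the lower bound via the inclusion $\OO_K^*\subseteq\Lambda$ and Theorem~\ref{Loxton} is in fact more explicit than the paper's own proof, which addresses only the upper bound in detail.
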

\begin{proof}
As previously mentioned, we can imitate \cite{VLL2013} to get
\begin{equation}\label{start}
S^{2n_r}_{\psi(\Lambda)}(M)=
\sum_{x\in X(M)}\frac{|\psi(x\Lambda^*) \cap B(M)|}{|\det(\psi(x))|^{2n_r}}.
\end{equation}
According to Lemma \ref{normindex} we have that  $\abs{\det(\psi(x))}^{2n_r}=
[\Lambda:x\Lambda]^{n_r}$.
Now 
$$
\sum_{x \in X(M)} \frac{1}{\abs{\det(\psi(x))}^{2n_r}} \leq \sum_{x \in X(M)} \frac{1}{[\Lambda:x\Lambda]^{n_r}}
 \leq \zeta_{\Lambda}(n_r).
$$
Applying this inequality  with Proposition \ref{uniformmain} to \eqref{start} now gives us the final result.
\end{proof}


\section{Quasi-orthogonal codes are better than  diagonal number field codes}
Let us now suppose we have an $n\times n_r$-MIMO channel, (for simplicity we assume $n_r>1$). For the existence of quasi-orthogonal code we also have to assume that $2\mid n$. Let us  now compare the growth of determinant sums of quasi-orthogonal and comparable diagonal number field codes in this  $n\times n_r$-MIMO channel.

In order to build a quasi-orthogonal code  $\psi(\Lambda)$ in $M_n(\C)$ the center $K$ of the algebra $\D$ must be  an $n/2$-dimensional totally real number field. For  a number field code $\psi(\OO_L)\subseteq M_n(\C)$, the field $L$ must be  an $n$-dimensional extension of some complex quadratic field $F$.

As we earlier saw, we have that
$$
 \sum_{X\in \psi(\Lambda)(M)} \frac{1}{|\det(X)|^{2n_r}}  = \theta(|\psi(\Lambda^*)\cap B(M)|)
$$
and
$$
|\psi(\Lambda^*)\cap B(M)|=\theta(|\psi(\OO_K^*)\cap B(M)|)=\theta(\log{M}^{n/2-1}).
$$
Therefore
$$
 \sum_{X\in \psi(\Lambda)(M)} \frac{1}{|\det(X)|^{2n_r}}  = \theta(\log{M}^{n/2-1}).
$$

On the other hand for the number field code we have that
\begin{eqnarray*}
 \sum_{X\in \psi(\OO_L)(M)} \frac{1}{|\det(X)|^{2n_r}} &= \theta(|\psi(\OO_L^*)\cap B(M)|)\\
 &=\theta(\log{M}^{n-1}).
\end{eqnarray*}
Here the last result follows from \cite[Theorem 2]{EL}.

We can now see that the growth of the inverse determinant sum for the quasi-orthogonal code is considerably lower than that 
of the number field code. This is due to the fact that the unit group of the order $\Lambda$ is essentially that of a low degree real number field.
We note that this difference can not be captured in the context of DMT as both of these codes have the same DMT curve.

\section*{Acknowledgement}
 The research of  R. Vehkalahti is supported  by the Academy of
Finland  grant  \#252457.

\end{document}